\documentclass[10pt,a4paper]{article}

\usepackage[utf8]{inputenc}
\usepackage[T1]{fontenc}
\usepackage{lmodern}
\usepackage[english]{babel}

\usepackage[a4paper,margin=1in]{geometry}
\usepackage{setspace}
\usepackage{microtype}
\usepackage{csquotes}
\usepackage{titlesec}
\usepackage{amsmath,amssymb,amsthm,mathtools,nicefrac}
\usepackage{aliascnt}
\numberwithin{equation}{section}

\usepackage{graphicx}
\usepackage{subcaption}
\usepackage{booktabs}
\usepackage{multirow}
\usepackage{caption}
\usepackage{array}
\newcolumntype{R}[1]{>{\raggedleft\arraybackslash}p{#1}}

\usepackage{enumitem}

\usepackage{algorithm}
\usepackage{algpseudocode}

\usepackage[table]{xcolor}
\usepackage{hyperref}
\usepackage[capitalise,nameinlink,noabbrev]{cleveref}
\usepackage[numbers]{natbib}

\hypersetup{
	colorlinks=true,
	linkcolor=blue!60!black,
	citecolor=blue!60!black,
	urlcolor=blue!60!black
}
\newcommand{\gc}{\cellcolor{gray!20}}
\newtheorem{proposition}{Proposition}[section]

\newaliascnt{theorem}{proposition}
\newtheorem{theorem}[theorem]{Theorem}
\aliascntresetthe{theorem}

\newaliascnt{lemma}{proposition}

\aliascntresetthe{lemma}

\newaliascnt{corollary}{proposition}

\aliascntresetthe{corollary}

\theoremstyle{remark}

\newaliascnt{remark}{proposition}
\newtheorem{remark}[remark]{Remark}
\aliascntresetthe{remark}

\newaliascnt{assumption}{proposition}

\aliascntresetthe{assumption}

\crefname{proposition}{proposition}{propositions}
\Crefname{proposition}{Proposition}{Propositions}

\crefname{theorem}{theorem}{theorems}
\Crefname{theorem}{Theorem}{Theorems}

\crefname{lemma}{lemma}{lemmas}
\Crefname{lemma}{Lemma}{Lemmas}

\crefname{corollary}{corollary}{corollaries}
\Crefname{corollary}{Corollary}{Corollaries}

\crefname{remark}{remark}{remarks}
\Crefname{remark}{Remark}{Remarks}

\crefname{assumption}{assumption}{assumptions}
\Crefname{assumption}{Assumption}{Assumptions}

\newcommand{\R}{\mathbb{R}}

\newcommand{\locvol}{\sigma_{\mathrm{loc}}}
\newcommand{\treevol}{\sigma^{\mathrm{tree}}}
\newcommand{\gprvol}{\widehat{\sigma}^{\mathrm{GPR}}}

\newcommand{\truelv}{\sigma_{\mathrm{LV}}}
\newcommand{\Tree}{\mathbf{S}}

\title{\textbf{
Neural Calibration of a Complete Market Model
}}
\author{
	Andrea Molent\thanks{Dipartimento di Scienze Economiche e Statistiche, Università degli Studi di Udine, Udine, Italy, andrea.molent@uniud.it.}
	\and
	Michel Vellekoop\thanks{Faculty of Economics and Business, University of Amsterdam, Amsterdam, the Netherlands, m.h.vellekoop@uva.nl.}
}

\date{}

\usepackage{placeins}
\usepackage{xfp}
\usepackage{siunitx}

\newcommand{\nz}[1]{%
	\begingroup
	\sisetup{round-mode=places, round-precision=0}%
	$\num{#1}$%
	\endgroup
}

\newcommand{\no}[1]{%
	\begingroup
	\sisetup{round-mode=places, round-precision=1}%
	$\num{#1}$%
	\endgroup
}

\newcommand{\nm}[1]{%
	\begingroup
	\if\relax\detokenize{#1}\relax
	\else
	\sisetup{round-mode=places, round-precision=0}%
	\num{\fpeval{#1}}%
	\fi
	\endgroup
}

\newcommand{\nd}[1]{%
	\begingroup
	\sisetup{round-mode=places, round-precision=2}%
	\ensuremath{\num{#1}}%
	\endgroup
}

\newcommand{\nl}[1]{%
	\begingroup
	\if\relax\detokenize{#1}\relax
	\else
	\sisetup{round-mode=places, round-precision=1}%
	\num{\fpeval{#1*0.001}}%
	\fi
	\endgroup
}

\begin{document}
	
	\maketitle
	
	\begin{abstract}
		We propose a neural calibration method to construct a recombining binomial tree directly from a set of given option prices. Rather than estimating a continuous option pricing function or a local volatility surface as an intermediate object, a neural network is used to deform a benchmark lattice. This leads to a discrete pricing model which is guaranteed to be arbitrage-free, complete, easy to interpret, and can be used directly for pricing and to find replicating trading strategies.
        Calibration is formulated as a penalized optimization problem that combines a repricing error with an admissibility penalty, and an optional  spatial regularization term based on implied local volatilities.
        
            Numerical experiments on synthetic and SPX market data show that the proposed approach
		yields accurate repricing and is very competitive when compared to recently proposed other neural calibration methods. It preserves
		the computational advantages of lattice-based valuation and hedging. In particular, the calibrated tree can be reused
		to price contracts that allow early exercise, and could even be calibrated directly with American option prices.
	\end{abstract}
	
	\noindent \textbf{Keywords:} binomial tree, neural networks, option calibration, local volatility, arbitrage-free pricing, complete market

    \noindent \textbf{JEL Classification:} C45, C63, G13.
    
	\section{Introduction}
	
	Calibrating an option pricing model to market data remains a central problem in computational finance. A large part of the existing literature focuses on constructing a continuous relationship, such as an implied-volatility surface, a local-volatility surface, or an option-price surface, for different strikes and maturities. Once such a relationship has been estimated, additional numerical procedures are typically required in order to price new contracts. Machine learning methods can help to generate better fits to the data, but generated pricing networks may be difficult to interpret, and replicating strategies or sensitivities such as the "Greeks" may not be directly available, especially when options with the possibility of early exercise need to be priced using models that have been calibrated on options without that possibility. 
	
	We therefore choose a different approach. Instead of reconstructing a continuous implied relationship and subsequently using it for valuation, we calibrate a recombining binomial tree directly to observed option prices. The output of the procedure is therefore not merely an interpolated surface, but a discrete market model that is immediately available for pricing and hedging of European and American options. 
	
	The construction starts from a base tree; this may be, for example, a Cox--Ross--Rubinstein (CRR) tree \citep{cox1979} or a previously calibrated tree. A neural network then deforms the node values of this benchmark tree in log-space. The transformed tree induces state-dependent risk-neutral transition probabilities, from which option prices are computed and matched to market observations through a differentiable loss function. In addition to the pricing error, this objective function includes a penalty to prevent inadmissible transition probabilities and an optional spatial regularizer that smooths the tree-implied local volatility.
	%When the spatial regularizer is not used, the  calibration choices are limited to the number of time steps \(N_T\) and to the maximum number of epochs  \(N_E\) that are used to train the neural network.
	
	Existing binomial-tree and implied-tree approaches typically construct a lattice through forward methods or through ad hoc calibration rules.  \citet{dermankani1994} first introduced a seminal forward-induction construction that extracts a smile-consistent binomial tree from European option prices. Closely related in spirit, \citet{rubinstein1994} derived implied recombining binomial trees from observed option prices and shows how such trees can be used to recover a fully specified discrete pricing structure. The  no-arbitrage condition  is a central issue in such constructions: in particular, \citet{moriggia2009} revisit the Derman--Kani framework and point out that negative probabilities may be generated in option-implied trees.
	The constant-probability recombining tree proposed by \citet{li2000new} offers an alternative to earlier implied-tree algorithms, but the theoretical support for the method is based on local consistency and weak-convergence arguments, rather than on a general pricing-convergence result for implied trees. Our experiments suggest that, while useful as an approximation, it may encounter difficulties in producing prices that converge to the continuous time limit of the local volatility models in discrete time.
	
	A second strand of the literature concerns local-volatility calibration. Based on theoretical results in \citet{dupire1994}, these approaches aim to recover a continuous local-volatility surface from option prices, typically through regularization and smoothing \cite{crepey2003}. The methods   often require numerical solutions for PDE's to price new payoffs, and calibration may be difficult and time-consuming. More recently, neural-network methods have been proposed for option-price interpolation and local-volatility calibration under no-arbitrage constraints. \citet{chataigner2020} develop a deep-learning framework for interpolating European vanilla option prices while recovering a local-volatility surface. 
    In a related direction, \citet{xian2026risk} propose  generative networks that learn maturity-dependent risk-neutral densities directly from option prices, without specifying a complete dynamic model for the underlying asset. In 
    \citet{wang2025} other coordinates are used: not the densities but a local volatility surface is calibrated, in a procedure in which option prices and local volatilities are learned jointly.   	

     Other related contributions use machine learning or regularization for option calibration and pricing in different ways. \citet{fan2026} parameterize the drift and volatility functions of a stochastic differential equation for stock prices using neural networks and calibrate them to market option prices, using simulation-based methods for European options and PDE-based methods for American options. \citet{jang2019} propose a generative Bayesian neural-network model for American index options, while \citet{zhang2023} construct implied-volatility surfaces that change over time, using deep neural networks subject to static no-arbitrage constraints.  \citet{geng2014} formulate local-volatility calibration as a non-parametric inverse problem using second-order Tikhonov regularization.
     In most approaches, the dynamics of stock prices are assumed to be Markovian; for machine-learning methods without this assumption, see for example \citet{goudenege2020}.

   Rather than estimating densities, or continuous surfaces for prices or local volatilities, we calibrate a recombining binomial market model directly from option prices. The paper is therefore closer in spirit to the implied-tree tradition initiated by Derman and Kani, but we replace classical forward-induction procedures by a neural deformation method and an optimization-based regularization strategy. Accordingly, the calibration problem aims to pick the best choice in a  set of admissible recombining binomial trees, and its solution yields a discrete, market-consistent, arbitrage-free, and complete pricing model. Since our method to create more regular volatility surfaces does not use the Dupire equation for European options in local volatility models, we can also handle American options which do not satisfy that equation.

   We compare the proposed approach to recent neural local-volatility methods and show that it is effective in reproducing market prices while maintaining a plausible local-volatility profile.
	In particular, we show this for a test of our method on a real dataset of SPX options.
	
	The remainder of the paper is organized as follows. \Cref{sec:model} introduces the discrete time model for the arbitrage-free and complete binomial market. \Cref{sec:method} presents the neural parametrization, pricing procedure, and optimization criteria. \Cref{sec:theory} establishes mathematical properties of the class of admissible trees and the penalized calibration problem. \Cref{sec:experiments} reports numerical results for several case studies. Finally, \Cref{sec:conclusion} summarizes the main findings.
	
	\section{An Arbitrage-Free and Complete Market Model}
	\label{sec:model}
	
	Consider a uniform time grid
	\[
	0=t_0<t_1<\cdots<t_{N_T}=T_{\max},
	\qquad \Delta t=t_{n+1}-t_n,
	\]
    with
	\(
	T_{\max}=\max_{1\le i\le M} T_i,
	\)
    the maximum over maturity dates $T_i$ in the set of options we want to fit during the calibration.
		Let $S_{n,j}$ denote the asset price at time $t_n$ under scenario $j$ at that time, with $j=0,\dots,n$ and $n=0,\dots,N_T$. We assume a recombining binomial structure, so that from node $(n,j)$ the process can move to either $(n+1,j)$ or $(n+1,j+1)$.
	   Trees are indexed according to the standard
	convention,
	\[
	S_{n,0}<S_{n,1}<\cdots<S_{n,n}, \qquad n=1,\ldots,N_T,
	\]
	so if the stock price equals $S_{n,j}$ then \(S_{n+1,j}\) is the lower
	successor and \(S_{n+1,j+1}\)  the upper successor.
	Under deterministic but possibly time-varying interest rates $r_n$ and dividend yields $q_n$ over the time interval $[t_n,t_{n+1}]$, the one-step risk-neutral pricing relation for a contingent claim value $V_{n,j}$ is
	\begin{equation}
		V_{n,j}
		=
		e^{-r_n\Delta t}
		\Bigl[
		(1-p_{n,j})V_{n+1,j}
		+
		p_{n,j}V_{n+1,j+1}
		\Bigr],
	\end{equation}
	where $p_{n,j}$ denotes the  risk-neutral  probability that the stock price moves upwards  in node $(n,j)$.  
	Imposing the martingale condition on the discounted asset price yields
	\begin{equation}
		\label{eq:rnp}
		p_{n,j}
		=
		\frac{e^{(r_n-q_n)\Delta t}S_{n,j}-S_{n+1,j}}
		{S_{n+1,j+1}-S_{n+1,j}}.
	\end{equation}
	
	The following proposition summarizes the
	key admissibility condition; the proof is given in Appendix~\ref{app:proof-arbfree}.
	
	\begin{proposition}
		\label{prop:arbfree}
		Assume that for every node $(n,j)$,
		\begin{equation}
		    S_{n+1,j}
		<
		e^{(r_n-q_n)\Delta t}S_{n,j}
		<
		S_{n+1,j+1}.\label{eq:pop21}
		\end{equation}
		Then the  corresponding recombining binomial model is arbitrage-free and the market is dynamically complete.
	\end{proposition}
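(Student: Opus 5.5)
The plan is to reduce both claims to the one-period case and then assemble the multi-period statement. Work on the finite probability space of the $2^{N_T}$ up/down paths of the recombining tree, with the filtration generated by the path up to time $t_n$. Use the money-market account $B_n=\exp\bigl(\sum_{k<n} r_k\Delta t\bigr)$ as numeraire. Take as traded assets the bond and the stock with dividends reinvested, i.e.\ the gains process $\tilde S_n = S_n\exp\bigl(\sum_{k<n} q_k\Delta t\bigr)$.

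First, from \eqref{eq:pop21} together with the ordering $S_{n+1,j}<S_{n+1,j+1}$, the quantity $p_{n,j}$ in \eqref{eq:rnp} lies strictly in $(0,1)$: the numerator is positive and strictly smaller than the denominator. Define a measure $\mathbb{Q}$ on paths by multiplying the transition probabilities $p_{n,j}$ and $1-p_{n,j}$ along each path. Every path then has strictly positive $\mathbb{Q}$-probability, so $\mathbb{Q}$ is equivalent to any reference measure that charges all paths. By construction \eqref{eq:rnp} is equivalent to
\[
(1-p_{n,j})S_{n+1,j}+p_{n,j}S_{n+1,j+1}=e^{(r_n-q_n)\Delta t}S_{n,j}.
\]
This says exactly that $\tilde S_n/B_n$ is a one-step $\mathbb{Q}$-martingale at every node, and hence a $\mathbb{Q}$-martingale.

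For no-arbitrage, invoke the finite-state fundamental theorem of asset pricing (Harrison--Pliska), or argue directly. For any self-financing strategy the discounted value process is a martingale transform of $\tilde S/B$, hence a $\mathbb{Q}$-martingale. If it starts at $0$ and ends nonnegative, it has zero expectation under $\mathbb{Q}$. Since every path has positive $\mathbb{Q}$-mass, the terminal value is identically zero, so there is no arbitrage.

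For completeness, use backward induction on an arbitrary payoff $X$ at time $T_{\max}$, a function of the path. At each node, seek a stock holding $\Delta$ and bond holding $\beta$ that replicate the two successor values $V^u,V^d$, which are already known by induction. This requires solving
\[
\Delta\,S_{n+1,j+1}e^{q_n\Delta t}+\beta e^{r_n\Delta t}=V^u,\qquad \Delta\,S_{n+1,j}e^{q_n\Delta t}+\beta e^{r_n\Delta t}=V^d.
\]
This linear system has determinant proportional to $S_{n+1,j+1}-S_{n+1,j}\neq 0$, so it is uniquely solvable. The initial cost of this portfolio is the pricing relation for $V_{n,j}$. The backward induction works path-wise as well, not just on the recombined nodes, so path-dependent claims are also covered. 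Equivalently, $\mathbb{Q}$ is the unique equivalent martingale measure, because each one-step martingale equation pins down $p_{n,j}$; completeness then follows from the second fundamental theorem.

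The main obstacle is bookkeeping rather than substance. The careful step is stating the dividend convention (reinvestment into the gains process) so that the martingale condition is exactly \eqref{eq:rnp} and the replication equations close. Everything else is the standard one-period argument repeated at each node.
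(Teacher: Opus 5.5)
Your proposal is correct and takes essentially the same route as the paper's proof: the strict inequalities give $p_{n,j}\in(0,1)$, no-arbitrage comes from the (unique) one-step equivalent martingale measure, and completeness comes from the uniquely solvable two-state replication at each node using the stock and the money-market account. What you add is the detail the paper leaves implicit: the explicit path-space measure $\mathbb{Q}$, the martingale-transform argument, the explicit $2\times 2$ replication system, and the dividend-reinvestment convention under which \eqref{eq:rnp} is exactly the martingale condition.
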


 \medskip

	Our neural calibration does not impose the inequalities in
	\Cref{prop:arbfree} as hard constraints at each optimization step. Rather,
	the calibration is initialized from an admissible benchmark tree, and the objective contains a term which ensures that violations of the inequalities in \Cref{prop:arbfree}
     are penalized. The calibrated trees generated in our numerical
	experiments are checked ex post to satisfy \eqref{eq:pop21}. Hence, it applies for all nodes in all reported
	calibrated trees.

	\section{Neural Tree Calibration}
	\label{sec:method}

		\subsection{Benchmark tree and neural deformation}
	Let $S^{0}_{n,j}$	denote a benchmark recombining tree such as a standard Cox-Ross-Rubinstein lattice \cite{cox1979} or a previously calibrated tree.
	Rather than learning node values from scratch, we let a neural network produce a deformation of this benchmark tree in log-space. More precisely, for each node $(n,j)$ we define
	\begin{equation}
		\delta_{n,j}
		=
		f_{\theta}\!\left(
		\frac{t_n}{T_{\max}},
		\log\frac{S^{0}_{n,j}}{S_0}
		\right),\label{eq:31}
	\end{equation}
	where $f_{\theta}$ is a feedforward neural network with parameters $\theta$ and $S_0$ is the initial asset price.
	The calibrated tree is then given by
	\begin{equation}
		\label{eq:tree-transform}
		S^{(\theta)}_{n,j}
		=
		S^{0}_{n,j}\exp(\delta_{n,j}),
	\end{equation}
	so positivity of node values is guaranteed.

	The feedforward neural network $f_{\theta}$ is a parametric map obtained by composing affine transformations and nonlinear activation functions; see, for example, \citet[Chapter~6]{goodfellow2016deep}. For an input \(x\in\R^{d_0}\), a network with \(L-1\) hidden layers is defined recursively by
	\[
	h^{(0)}=x,\qquad
	h^{(\ell)}=\phi\!\left(A^{(\ell)}h^{(\ell-1)}+b^{(\ell)}\right),
	\quad \ell=1,\dots,L-1,\qquad
	f_{\theta}(x)=A^{(L)}h^{(L-1)}+b^{(L)},
	\]
	where
	\(
	\theta=\{A^{(\ell)},b^{(\ell)}\}_{\ell=1}^L
	\)
	collects all weight matrices and biases, and \(\phi\) is an activation function, for example the rectified linear unit (ReLU) function $\phi(x)=\max\{ x,0\}=:(x)^+$ on $\mathbb R$.
	
	In the present paper, the input dimension is \(d_0=2\)
	and the scalar output defines the node-wise log-deformation
	\(	\delta_{n,j}=f_{\theta}(x_{n,j})\) in \eqref{eq:31}.
	Thus, rather than optimizing all node values independently, we optimize a shared parametric map $f_\theta:\R^2\to\R$ over the whole lattice.
	This map provides a flexible non-linear function class which allows us to avoid a high-dimensional unconstrained search over node values. Multilayer feed-forward networks are known to be rich approximation classes on compact sets; see \citet{hornik1989multilayer}. This makes them well suited to capture non-trivial deformations of the benchmark lattice induced by the option data.

\subsection{Pricing on the calibrated tree}
\label{sec:pricing-tree}

Given the node values \(S_{n,j}^{(\theta)}\), the risk-neutral probabilities
\(p_{n,j}^{(\theta)}\) are computed through \Cref{eq:rnp}. European
option prices are obtained by propagating Arrow--Debreu state prices across the
tree and aggregating the corresponding payoffs at the relevant maturity layer;
see Arrow~\citep{arrow1964role}, Debreu~\citep{debreu1959}. American-style claims  
are priced by backward induction with the early-exercise constraint.

Let \(\lambda_{n,j}^{(\theta)}\) denote the Arrow-Debreu state price at node \((n,j)\). These quantities are   propagated recursively according to, for \(j=1,\dots,n\),
\[
\lambda_{n+1,j}^{(\theta)}
=
e^{-r_n\Delta t}(1-p_{n,j}^{(\theta)})\,\lambda_{n,j}^{(\theta)}
+
e^{-r_n\Delta t}p_{n,j-1}^{(\theta)}\,\lambda_{n,j-1}^{(\theta)},\qquad \lambda_{0,0}^{(\theta)}=1,
\]
while at the boundaries ($j=0$ or $j=n+1$), the recursion reduces to
\[
\lambda_{n+1,0}^{(\theta)}
=
e^{-r_n\Delta t}(1-p_{n,0}^{(\theta)})\,\lambda_{n,0}^{(\theta)},
\qquad
\lambda_{n+1,n+1}^{(\theta)}
=
e^{-r_n\Delta t}p_{n,n}^{(\theta)}\,\lambda_{n,n}^{(\theta)}.
\]
For a European call or put with strike \(K\) and maturity \(t_n\) we thus find the price
\[
\Pi^{(\theta)}(K,t_n)
=
\sum_{j=0}^n \lambda_{n,j}^{(\theta)}
g(S_{n,j}^{(\theta)}),
\]
with  $
g(S)=(S-K)^+
$ or $
g(S)=(K-S)^+$,
and if a maturity \(T\in(t_n,t_{n+1})\) does not coincide with a grid point, the price can be defined by linear interpolation:
\[
\Pi^{(\theta)}(K,T)
=
(1-w)\Pi^{(\theta)}(K,t_n)+w\Pi^{(\theta)}(K,t_{n+1}),
\qquad
w=\frac{T-t_n}{t_{n+1}-t_n}.
\]
This interpolation is used for matching off-grid quoted maturities; the arbitrage-free and completeness statements refer to the discrete-time market defined on the tree dates\footnote{The method presented here can easily be extended to grids that are not uniform in time, but to avoid the more cumbersome notation we have not implemented that here.}.

One can price American-style contracts by
standard backward induction. If the maturity is
\(t_m\) and \(g\) denotes the payoff function, then
$
V^{(\theta)}_{m,j}=
g\!\left(S^{(\theta)}_{m,j}\right)
$
for 
$j=0,\ldots,m,
$
and, for \(n=m-1,\ldots,0\),
\[
V^{(\theta)}_{n,j}
=
\max\left\{
g\!\left(S^{(\theta)}_{n,j}\right),
e^{-r_n\Delta t}
\left[
\left(1-p^{(\theta)}_{n,j}\right)
V^{(\theta)}_{n+1,j}
+
p^{(\theta)}_{n,j}
V^{(\theta)}_{n+1,j+1}
\right]
\right\},
\qquad j=0,\ldots,n,
\]
The American option price generated by the model is then  \(\Pi^{(\theta)}(K,t_m)= V^{(\theta)}_{0,0} \). For maturities
falling between two tree dates, the same linear interpolation convention as
above is applied to the corresponding tree prices.

Let
\[
\{(K_i,T_i,\Pi_i^{\mathrm{mkt}},g_i,\mathrm{style}_i)\}_{i=1}^{N_O}
\]
denote the set of $N_O$ observed option quotes, where \(K_i\) is the strike, \(T_i\) is the
maturity, \(\Pi_i^{\mathrm{mkt}}\) is the observed market price, \(\mathrm{style}_i\) specifies the exercise style (European or American), and  \(g_i\) is the payoff
function, so for standard calls and puts 
$
g_i(S)=(S-K_i)^+
$ or $
g_i(S)=(K_i-S)^+$, respectively.
For each quote, \(\Pi_i^{(\theta)}\) denotes the corresponding model price
computed on the calibrated tree by the pricing procedure described above.

When calibrating, we minimize the mean squared pricing
error
\begin{equation}
	\label{eq:mse}
	\mathrm{MSE}(\theta)
=
\frac1{N_O}
\sum_{i=1}^{N_O}
\left(
\Pi_i^{(\theta)}
-
\Pi_i^{\mathrm{mkt}}
\right)^2,
\end{equation}
	under the constraint that all risk-neutral probabilities remain admissible. During the optimization, we therefore  discourage violations of admissibility  by  a penalty term:
	\begin{equation}\label{eq:Pprob}
		\mathcal{P}_{\mathrm{prob}}(\theta)
		=
		\sum_{n,j}
		\left[
		\phi\!\bigl(-p_{n,j}^{(\theta)}\bigr)^2
		+
		\phi\!\bigl(p_{n,j}^{(\theta)}-1\bigr)^2
		\right],
	\end{equation}
    with $\phi(x)=\max\{ x,0\}=:(x)^+$ the ReLU function we defined earlier.
	In the implementation, $\mathcal{P}_{\mathrm{prob}}(\theta)$ is multiplied by a large constant $\lambda_{\mathrm{prob}}$, so that
	inadmissible transition probabilities become prohibitively expensive during
	optimization.	
    
	In the numerical experiments reported in this paper, this penalization
	mechanism proved to be very effective. In all calibrated trees, we verified ex
	post both probability admissibility
	and the one-step node-ordering inequalities
	at every node. These conditions were always satisfied, so by
	\Cref{prop:arbfree}, all generated binomial markets were arbitrage-free and complete.

\subsection{Tree-implied local volatility and spatial regularization}
\label{sec:tree-lv-sr}

Although our method does not calibrate a continuous local-volatility surface
directly, each one-step transition of the calibrated tree induces a local
conditional variance for the log-return. This 
quantity is the basis for our spatial regularization term.
Let
\[
x^{(\theta)}_{n,j}
:=
\log\left(\frac{S^{(\theta)}_{n,j}}{S_0}\right)
\]
denote the log-price coordinate of node \((n,j)\), and let
\(\hat p^{(\theta)}_{n,j}\) be the clipped version of the corresponding local risk-neutral up probability:
\[
{\hat p}^{(\theta)}_{n,j}
=
\min\left\{1,\max\left\{0,p^{(\theta)}_{n,j}\right\}\right\}.
\]

We define the tree-implied local volatility at node \((n,j)\), denoted by
\({\treevol_{n,j}}^{(\theta)}\), through the one-step log-dispersion of the
calibrated tree and use \(v^{(\theta)}_{n,j}\) for the corresponding squared local volatility:
\begin{equation}
	\label{eq:tree-lv}
	v^{(\theta)}_{n,j}
	:=
	\left({\treevol_{n,j}}^{(\theta)}\right)^2
	=
	\frac{
		\hat p^{(\theta)}_{n,j}
		\left(1-\hat p^{(\theta)}_{n,j}\right)
		\left(
		x^{(\theta)}_{n+1,j+1}
		-
		x^{(\theta)}_{n+1,j}
		\right)^2
	}{
		\Delta t
	}.
\end{equation}

To discourage  spatial oscillations of the tree-implied
local-volatility surface, we add a regularization term along the log-price
direction.
The penalty is applied to \(v_{n,j}^{(\theta)}\), that is, to the squared
tree-implied local volatility, rather than directly to
\({\treevol_{n,j}}^{(\theta)}\). This choice avoids the
repeated evaluation of square roots and their derivatives during training,
thereby helping to reduce the computational cost of the regularization term.

For each penalized layer \(n\), define
\[
\Delta x^{(\theta)}_{n,j}
=
x^{(\theta)}_{n,j+1}
-
x^{(\theta)}_{n,j},
\qquad 
\Delta v^{(\theta)}_{n,j}
=
v^{(\theta)}_{n,j+1}
-
v^{(\theta)}_{n,j}
\qquad
j=0,\ldots,n-1.
\]
Moreover, let
\[
L_n^{(\theta)}
=
x^{(\theta)}_{n,n}
-
x^{(\theta)}_{n,0}
\]
be the log-price width of layer \(n\). The first two layers are not penalized, so that the spatial penalty is averaged
over the set $\mathcal N=\{2,\ldots,N_T-1\}$.
Given a small numerical constant \(\varepsilon_x>0\), the %
spatial regularization penalty is then defined as
\begin{equation}
	\label{eq:Pspace}
	\mathcal{P}_{\mathrm{space}}(\theta)
	=
	\frac{1}{|\mathcal N|}
	\sum_{n\in\mathcal N}
	\mathcal{P}_{\mathrm{space},n}(\theta),\qquad
     \mathcal{P}_{\mathrm{space},n}(\theta)
=
\frac{1}{L_n^{(\theta)}+\varepsilon_x}
\sum_{j=0}^{n-1}
\frac{
	\left(
	\Delta v^{(\theta)}_{n,j}
	\right)^2
}{
	\Delta x^{(\theta)}_{n,j}+\varepsilon_x}.
\end{equation}

This expression may be viewed as a normalized
discrete first-order Tikhonov penalty for the squared tree-implied local
volatility along the log-price direction. Since it represents the local
variance associated with the one-step log-return, we refer to it as a
local-volatility or spatial regularization term. 

\subsection{Calibration objective}
\label{sec:complete-loss}

Combining the pricing error, the probability-admissibility penalty, and the
local-volatility regularization term, the objective function for the calibration is
\begin{equation}
	\label{eq:complete-objective}
	\mathcal L(\theta)
	=
	\mathrm{MSE}(\theta)
	+
	\lambda_{\mathrm{prob}}\mathcal{P}_{\mathrm{prob}}(\theta)
	+
	\lambda_{\mathrm{space}}\mathcal{P}_{\mathrm{space}}(\theta),
\end{equation}
where \(\lambda_{\mathrm{prob}}\ge 0\) and \(\lambda_{\mathrm{space}}\ge 0\)
control the strength of the two penalty terms. In the numerical implementation,
\(\lambda_{\mathrm{prob}}\) is fixed at a large value in order to strongly
discourage inadmissible transition probabilities and is not treated as a tuning
parameter. The coefficient \(\lambda_{\mathrm{space}}\) is optional: the default
calibration corresponds to \(\lambda_{\mathrm{space}}=0\), while positive values
smoothen the local volatility surface
when the option data are less informative.
In applications, \(\lambda_{\mathrm{space}}\) should be chosen conservatively, in line with the standard validation-based use of regularization parameters in ill-posed inverse problems and smoothing methods \citep{crepey2003,tikhonov1977}. 

The calibrated parameters are therefore obtained by solving
\[
\theta^\star
\in
\operatorname*{arg\,min}_{\theta}
\mathcal L(\theta),
\]
up to the accuracy reached by the numerical optimizer. The calibrated tree is
then defined as \(S^{(\theta^\star)}\) using \eqref{eq:31}-\eqref{eq:tree-transform}, which generates the corresponding risk-neutral
probabilities \(p^{(\theta^\star)}\) in \eqref{eq:rnp}.

\label{sec:gpr-local-vol}

Although the proposed method calibrates a discrete recombining tree rather than
a continuous local-volatility model, reconstructing a continuous
local-volatility surface from the calibrated tree remains useful.
The discrete dataset used in the reconstruction is
\[
\mathcal{D}^{(\theta)}
=
\left\{
\bigl(t_n,S_{n,j}^{(\theta)},{\treevol_{n,j}}^{(\theta)}\bigr)
:\; 0 \le j \le n < N_T
\right\},
\]
with the \({\treevol_{n,j}}^{(\theta)}\) as defined in \eqref{eq:tree-lv}.
Since these quantities are available only at irregularly spaced lattice
locations, we reconstruct from them a continuous function in the whole \((t,S)\)-domain. To this end, we fit a Gaussian process regression (GPR)
model to the inputs
\((t_n,S_{n,j}^{(\theta)})\) and responses
\({\treevol_{n,j}}^{(\theta)}\) collected in $\mathcal{D}^{(\theta)}$, and define the reconstructed surface as the
posterior mean, which we denote by
\[
\gprvol{}^{(\theta)}(t,S).
\]
The regression uses a squared-exponential kernel with automatic relevance
determination and a constant basis function, see  \citet{rasmussen2006gpml} for details.
The degree
of smoothness can be adjusted through the noise level used in the fit: larger
values produce a smoother surface, whereas smaller values make the
reconstruction track the tree values more closely. 
This smoothing step does not affect the calibrated tree itself, but it provides an a posteriori estimate of the
 local volatility structure implied by the calibration.

\section{Well-posedness of the Calibration Problem}
\label{sec:theory}

In this section we formalize the penalized calibration as a finite-dimensional optimization problem over a class of recombining trees with a fixed time grid. A recombining binomial tree with a fixed number of \(N_T\) time steps contains \(d=(N_T+1)(N_T+2)/2\) nodes and can be identified with a vector $(S_{0,0}^{(\theta)},S_{1,0}^{(\theta)},S_{1,1}^{(\theta)},S_{2,0}^{(\theta)},\ldots,S_{N_T,N_T}^{(\theta)})$ in \(\mathbb{R}^d\).

%\subsection{Well-posedness of the finite-dimensional penalized problem }

For fixed constants \(0<m<M\) and \(\eta>0\), let \(\mathcal{A}_{\eta}(m,M)\subset \mathbb{R}^q\) denote the set of all values $\theta$ that lead to recombining trees 
\(
\Tree^{(\theta)}=
%\{S_{n,j}^{(\theta)}\}_{0\le j\le n\le N_T}
(S_{0,0}^{(\theta)},S_{1,0}^{(\theta)},S_{1,1}^{(\theta)},S_{2,0}^{(\theta)},\ldots,S_{N_T,N_T}^{(\theta)})
\)
such\footnote{We sometimes use the shorthand notation \(\{S_{n,j}^{(\theta)}\}_{0\le j\le n\le N_T}\) here to denote all tree values, but they should be thought of as the indicated vector which is ordered, first in $n$ and then in $j$.} that
\begin{align}
	m &\le S_{n,j}^{(\theta)} \le M, 
	\qquad 0\le j\le n\le N_T, 
	\label{eq:Aeta-bounds}\\
	S_{n+1,j}^{(\theta)}+\eta &\le e^{(r_n-q_n)\Delta t}S_{n,j}^{(\theta)} \le S_{n+1,j+1}^{(\theta)}-\eta,
	\qquad 0\le j\le n\le N_T-1.
	\label{eq:Aeta-margin}
\end{align}
Thus any \(\theta\in \mathcal{A}_{\eta}(m,M)\) generates a positive recombining tree satisfying the one-step no-arbitrage inequalities with a uniform margin $\eta$, and for each option quote \(i\) a corresponding price $\Pi_i^{(\theta)}$ as defined in subsection \ref{sec:pricing-tree}.

\begin{theorem}
	\label{thm:wellposedness}
	Fix \(N_T\in\mathbb{N}^*\), 
    \(\lambda_{\mathrm{prob}}\geq 0\)
		and \(\lambda_{\mathrm{space}}\geq 0\), \(0<m<M\), and \(\eta>0\). 
        Assume that the neural parametrization
    \( (\theta,x,y)\mapsto f_\theta(x,y)\)
	is continuous and that there exists a $\theta\in\mathbb{R}$ such that for any $(x,y)\in ([0,1]\times\mathbb{R})$,  $f_\theta(x,y)=0$. 
    Then 
    the penalized objective
		\(\mathcal L\) defined in \eqref{eq:complete-objective} admits at least one
		global minimizer on any non-empty compact subset of \(\mathcal{A}_{\eta}(m,M)\).
\end{theorem}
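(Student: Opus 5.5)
The plan is to apply the Weierstrass extreme value theorem. Let \(C\subset\mathcal{A}_{\eta}(m,M)\) be a non-empty compact set. If \(\mathcal L\) is finite-valued and continuous on \(C\), it attains its infimum there, which gives the claim. So the work is to show continuity of \(\theta\mapsto\mathcal L(\theta)\) on \(\mathcal{A}_{\eta}(m,M)\). I would do this by writing \(\mathcal L\) as a composition of continuous maps, stage by stage.

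\textbf{Step 1 (the nodes).} The inputs \(x_{n,j}=(t_n/T_{\max},\log(S^0_{n,j}/S_0))\) are fixed points of \([0,1]\times\R\) that do not depend on \(\theta\). Joint continuity of \((\theta,x,y)\mapsto f_\theta(x,y)\) therefore makes each \(\theta\mapsto\delta_{n,j}\) continuous. By \eqref{eq:tree-transform}, each \(S^{(\theta)}_{n,j}=S^0_{n,j}e^{\delta_{n,j}}\) is then continuous in \(\theta\). The assumption that some parameter gives \(f_\theta\equiv 0\) only ensures that the benchmark tree is reachable. This is relevant to non-emptiness of the admissible set, but non-emptiness is already assumed through \(C\).

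\textbf{Step 2 (probabilities and prices).} On \(\mathcal{A}_{\eta}(m,M)\), \eqref{eq:Aeta-margin} gives
\[
S^{(\theta)}_{n+1,j+1}-S^{(\theta)}_{n+1,j}\ge 2\eta>0 .
\]
Hence the denominator in \eqref{eq:rnp} never vanishes, and \(p^{(\theta)}_{n,j}\) is a continuous quotient. The margin also gives \(p^{(\theta)}_{n,j}\in(0,1)\), so \(\mathcal P_{\mathrm{prob}}\equiv 0\) on \(\mathcal{A}_\eta\); it is continuous in any case, being a composition with the ReLU. The Arrow--Debreu prices \(\lambda^{(\theta)}_{n,j}\) come from finitely many sums and products of continuous functions, so they are continuous. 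European prices are finite sums \(\sum_j\lambda_{n,j}g(S_{n,j})\) with \(g\) continuous, and linear interpolation in maturity preserves continuity. American prices are obtained by finitely many backward steps using \(\max\), sums and products, so they are continuous as well. Therefore \(\mathrm{MSE}(\theta)\) is continuous.

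\textbf{Step 3 (the spatial penalty).} This is the step most likely to need care, since it involves logarithms, clipping and quotients. The logarithm is applied to \(S^{(\theta)}_{n,j}/S_0\ge m/S_0>0\), so \(x^{(\theta)}_{n,j}\) is continuous. The clipping \(\hat p=\min\{1,\max\{0,p\}\}\) is continuous, so \(v^{(\theta)}_{n,j}\) in \eqref{eq:tree-lv} is continuous. For the denominators in \eqref{eq:Pspace}, I would first establish the strict ordering \(S_{n,j}<S_{n,j+1}\) within each layer. For \(n\ge1\), apply \eqref{eq:Aeta-margin} at node \((n-1,j)\):
\[
S_{n,j}+\eta\le e^{(r_{n-1}-q_{n-1})\Delta t}S_{n-1,j}\le S_{n,j+1}-\eta .
\]
Together with the lower bound \(m\) on the nodes, this gives \(\Delta x^{(\theta)}_{n,j}>0\) and \(L^{(\theta)}_n\ge0\). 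So \(\Delta x_{n,j}+\varepsilon_x\ge\varepsilon_x>0\) and \(L_n+\varepsilon_x\ge\varepsilon_x>0\), and \(\mathcal P_{\mathrm{space}}\) is a continuous, finite quotient.

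Combining Steps 1--3, \(\mathcal L\) is continuous on \(\mathcal{A}_\eta(m,M)\), and hence on \(C\). Weierstrass then yields \(\theta^\star\in C\) with \(\mathcal L(\theta^\star)=\min_C\mathcal L\).
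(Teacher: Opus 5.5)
Your proposal is correct and takes essentially the same route as the paper: Weierstrass on the compact set, with continuity of $\mathcal L$ built up stage by stage (nodes, one-step probabilities whose denominator is at least $2\eta$, Arrow--Debreu and backward-induction prices, then the penalties). The paper also proves compactness of the image set of trees and stability under small log-deformations, neither of which the existence claim needs, and your explicit check that $\Delta x^{(\theta)}_{n,j}+\varepsilon_x$ and $L^{(\theta)}_n+\varepsilon_x$ stay bounded away from zero on $\mathcal{A}_\eta(m,M)$ supplies a detail the paper only asserts.
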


The proof is given in Appendix~\ref{app:proof-wellposedness}. It combines the compactness of the admissible class of trees, the stability of strict admissibility under sufficiently small log-deformations, and the continuity of the tree-to-price and tree-to-penalty maps.

\begin{remark}
	\label{rem:nonconvexity}
	The theorem guarantees existence of global minimizers for the finite-dimensional penalized problem, but it does not imply convergence of the neural training algorithm to such minimizers. Indeed, optimization is performed in parameter space through the  non-linear map \(\theta\mapsto\Tree^{(\theta)}\), and the resulting objective function need not be convex. 
\end{remark}

\begin{remark}
	\label{rem:diffusion-limit}
	The previous theorem is concerned with calibration for a given set  of option data. If these data are consistent with a certain local volatility model in continuous time, it is natural to ask whether a sequence of calibrated trees based on more and more option data converges, as \(N_T\to\infty\), to this continuous-time local-volatility model if tree design quantities may be allowed to depend on \(N_T\), for instance  \(m(N_T)\), \(M(N_T)\), and \(\eta(N_T)\). Such a conclusion would require additional assumptions of the kind commonly used in weak-approximation results. More precisely, if the discrete log-price increments vanish uniformly on compacts, the associated conditional first and second moments converge to limiting coefficients \(b\) and \(a\), the corresponding third-moment remainder is negligible, and the induced log-price processes are tight in \(D([0,T];\mathbb{R})\), then the piecewise-constant log-price processes converge weakly to the diffusion
	\[
	dX_t=b(t,X_t)\,dt+\sqrt{a(t,X_t)}\,dW_t,
	\qquad X_0=\log S_0,
	\]
    with
    \[
	b(t,x)=r(t)-q(t)-\frac12 \sigma_{\mathrm{loc}}^2(t,e^x),
	\qquad
	a(t,x)=\sigma_{\mathrm{loc}}^2(t,e^x),
	\]
	provided that the martingale problem for the limiting generator is well posed. Under the usual additional uniform-integrability condition, this also yields convergence of discounted expectations for continuous payoffs with polynomial growth, and thus for European options with such payoffs.

In practice, we are interested in the proposed calibration method for a finite number of option data. Our calibration always involves a well-posed optimization
problem over a set of strictly admissible recombining trees. The
diffusion-limit perspective may provide additional intuition, but it is not
needed for the validity or practical usefulness of the option pricing and hedging method.     
\end{remark}

	\section{Numerical Experiments}
	\label{sec:experiments}
	
	\subsection{Experimental design}
	
	Our numerical study includes both synthetic and market-based experiments. Market-based data are collected for the SPX index, and in the synthetic setting option prices are generated under a known local-volatility specification, so we can assess both pricing accuracy and the  ability of the method to recover a given local-volatility profile.
	
     Calibration is performed through a single continuous training run
	starting from the initial CRR tree \citep{cox1979} with the constant volatility value that minimizes the RMSE over all option data. The neural network has two hidden and fully connected layers with $N_L$ ReLU units, and is initialized so that
	the initial deformation is zero.
	Training is performed with Adam; see, for example, \citet{kingma2015adam}. At
	the beginning of the run, a learning-rate finder is used to select a suitable
	initial learning-rate scale. The learning rate is then evolved according to a
	two-regime cyclical cosine schedule. During the initial part of the run, shorter
	cycles and a larger maximum learning rate are used to encourage exploration.
	During the refinement part, longer cycles and a smaller maximum learning rate
	are used, and gradients are clipped component-wise for numerical stability.

	Violations of the one-step risk-neutral probability constraints are discouraged
	through the quadratic probability-admissibility penalty described above, with
	the fixed scaling coefficient \(10^6\). Since training starts from the ordered
	CRR benchmark with zero initial deformation, the optimization is initialized
	inside the ordered admissible tree class. Probability admissibility and node
	ordering are then verified ex post on the calibrated trees. 
	
	In the synthetic experiments, the total training horizon is \(8000\) epochs.
	During the run, the best learnable parameters encountered so far are stored. At
	the checkpoint epochs
	$\
	N_E\in\{1000,2000,4000,8000\},
	$
	the best-so-far calibrated tree is saved without interrupting training.
	In the SPX experiment, the same continuous training protocol is used, but the
	training horizon is extended to \(16000\) epochs, with checkpoints
	$	N_E\in\{2000,4000,8000,16000\}.
	$
	In practical applications, the prescribed budget can also be interpreted as an upper bound: the run may be stopped earlier if the monitored pricing errors or validation diagnostics have plateaued.

    Unless otherwise stated, option-pricing RMSEs are reported in price units and
	scaled by a factor $10^2$. Thus, an entry equal to $x$ in a row labelled
	$(\times 10^{-2})$ corresponds to an unscaled pricing RMSE equal to
	$x\times10^{-2}$. Local-volatility errors are relative RMSEs and are reported
	as a percentage, following the convention used by Wang et al.~\cite{wang2025} in their paper\footnote{
	All computations
	reported in this paper were implemented in MATLAB  on a CPU for a desktop machine equipped with
	an Intel Core i7-12700 processor (2.10\,GHz), 32\,GB of RAM, and Windows 11 Pro.
	The computational time reported for the method of \citet{wang2025} was estimated
	using the Python code made available by the authors. This code was also run on a
	CPU, due to the lack of a GPU suitable for their precise computational settings.
	The runtime comparisons should therefore be interpreted as indicative
	rather than a strictly hardware-optimized benchmark.

}.

	\subsection{Synthetic-data experiments}\label{sec:sy_da_ex}

	We first consider synthetic option markets generated from the   local-volatility function used in \citet{wang2025}, 
	\begin{equation}
		\label{eq:lv}
		\truelv(t,s)=0.3 + y e^{-y},
		\qquad
		y=(t+0.1)\sqrt{\frac{s}{S_0}+0.1},
	\end{equation}
    with initial stock price $S_0=1000$, risk-free rate $r=0.04$ and zero dividends, $q=0$.

The goal is twofold: to evaluate how accurately the calibrated tree reproduces option prices and to verify whether the induced local-volatility structure recovers \eqref{eq:lv}.
To facilitate comparison with \citet{wang2025}, we use the same strikes and maturities. The European call prices are computed by solving the option-pricing PDE under the local-volatility model in \eqref{eq:lv} using 1000 space steps and 500 time steps. The first configuration, denoted by $10\times 20$, uses a grid of European call prices with $N_{T}^{\text{opt}}=10$ maturities uniformly distributed between $0.3$ and $1.5$ and $N_{K}^{\text{opt}}=20$ strikes uniformly distributed between $500$ and $3000$. The second one, denoted by $3\times 6$, is defined analogously, with $N_{T}^{\text{opt}}=3$ maturities and $N_{K}^{\text{opt}}=6$ strikes over the same ranges. 
	For out-of-sample pricing evaluation, we use a dense $256\times256$ grid with
256 maturities uniformly distributed over $[0.3,1.5]$ and 256 strikes uniformly
distributed over $[500,3000]$. This evaluation grid is defined over the same
domain as the training grids.
	
		\subsubsection{Option Pricing}

	In our numerical experiments, neither Li's tree \cite{li2000new} nor the binomial tree of Moriggia et al. \cite{moriggia2009} displayed convergence towards the PDE benchmark. Table~\ref{tab:Li_Moriggia} shows that, although both constructions remain implementable for finer time discretizations, the resulting European call prices do not systematically approach the benchmark values; in some cases, the pricing errors may even deteriorate at finer discretizations. 
	
		This 
    shows a limitation with respect to the specific refinement criterion considered here, but should not obscure the distinct contributions of the two approaches. Li's construction is mainly motivated by weak-approximation considerations for the tree dynamics under suitable regularity assumptions, while Moriggia et al.\ focus on admissibility and empirical pricing performance.
    It shows how delicate the construction of a recombining binomial tree remains even when the local-volatility specification is known, and motivates our search for a different calibration procedure.

	We now evaluate the   pricing performance of our proposed neural calibration method.
  \Cref{tab:Price_convergence} reports the relative percentage errors with respect to the PDE benchmark for the two training configurations, $10\times 20$ and $3\times 6$, after a cumulative training budget of $N_E=8000$ epochs. In this experiment $\lambda_{\mathrm{space}}=0$, so that the reported results isolate the fitting capability of the procedure, without considering the regularity of calibrated values on the tree.
	
	\begin{table}[b!]
	\begin{centering}
		\small{ 
			\begin{tabular*}{\textwidth}{@{\extracolsep{\fill}}l
					R{1.25cm}R{1.25cm}R{1.25cm}R{1.25cm}
					@{\hspace{1.2em}}
					R{1.25cm}R{1.25cm}R{1.25cm}R{1.25cm}
					@{\hspace{1.2em}}
					R{1.15cm}@{}}
				\toprule
				& \multicolumn{4}{c}{ Binomial tree by Li \cite{li2000new} (RPE)}
				& \multicolumn{4}{c}{ Binomial tree by Moriggia et al. \cite{moriggia2009} (RPE)}
				& { PDE} \tabularnewline
				
				{$\hfill N_T=$}
				& {$45$} & {$90$} & {$180$} & {$360$}
				& {$45$} & {$90$} & {$180$} & {$360$}
				&{ (price)} \tabularnewline
				
				\cmidrule(lr){2-5}\cmidrule(lr){6-9}\cmidrule(lr){10-10}
				{$K=500$}
				& $\underset{\left(-2.4\%\right)}{\nd{527.6305}}$
				& $\underset{\left(-2.7\%\right)}{\nd{526.2066}}$
				& $\underset{\left(-2.9\%\right)}{\nd{524.8996}}$
				& $\underset{\left(-3.0\%\right)}{\nd{524.1438}}$
				
				& $\underset{\left(-2.3\%\right)}{\nd{528.1826}}$
				& $\underset{\left(-1.8\%\right)}{\nd{530.7757}}$
				& $\underset{\left(-2.3\%\right)}{\nd{528.1184}}$
				& $\underset{\left(-3.9\%\right)}{\nd{519.6053}}$
				& {$540.54$}\tabularnewline
				
				{$K=750$}
				& $\underset{\left(-9.6\%\right)}{\nd{335.6443}}$
				& $\underset{\left(-11.1\%\right)}{\nd{330.1010}}$
				& $\underset{\left(-12.3\%\right)}{\nd{325.6979}}$
				& $\underset{\left(-13.1\%\right)}{\nd{322.8300}}$
				
				& $\underset{\left(-7.7\%\right)}{\nd{342.6237}}$
				& $\underset{\left(-1.8\%\right)}{\nd{364.4434}}$
				& $\underset{\left(-6.4\%\right)}{\nd{347.5553}}$
				& $\underset{\left(-23.9\%\right)}{\nd{282.4093}}$
				& {$371.29$}\tabularnewline
				
				{$K=1000$}
				& $\underset{\left(-18.8\%\right)}{\nd{205.6144}}$
				& $\underset{\left(-22.4\%\right)}{\nd{196.4141}}$
				& $\underset{\left(-24.7\%\right)}{\nd{190.5151}}$
				& $\underset{\left(-26.7\%\right)}{\nd{185.6084}}$
				
				& $\underset{\left(-1.8\%\right)}{\nd{248.4572}}$
				& $\underset{\left(-0.9\%\right)}{\nd{250.8123}}$
				& $\underset{\left(-3.0\%\right)}{\nd{245.5158}}$
				& $\underset{\left(-34.9\%\right)}{\nd{164.8701}}$
				& {$253.07$}\tabularnewline
				
				{$K=1250$}
				& $\underset{\left(-27.7\%\right)}{\nd{125.1529}}$
				& $\underset{\left(-32.2\%\right)}{\nd{117.3217}}$
				& $\underset{\left(-36.0\%\right)}{\nd{110.7800}}$
				& $\underset{\left(-39.1\%\right)}{\nd{105.3589}}$
				
				& $\underset{\left(0.4\%\right)}{\nd{173.8643}}$
				& $\underset{\left(-0.2\%\right)}{\nd{172.7584}}$
				& $\underset{\left(-3.7\%\right)}{\nd{166.7259}}$
				& $\underset{\left(-45.9\%\right)}{\nd{93.7239}}$
				& {$173.12$}\tabularnewline
				
				{$K=1500$}
				 & $\underset{\left(-32.7\%\right)}{\nd{80.3190}}$
				& $\underset{\left(-39.2\%\right)}{\nd{72.6441}}$
				& $\underset{\left(-44.5\%\right)}{\nd{66.2367}}$
				& $\underset{\left(-48.8\%\right)}{\nd{61.1536}}$
				
				& $\underset{\left(-0.4\%\right)}{\nd{118.9447}}$
				& $\underset{\left(-0.1\%\right)}{\nd{119.2611}}$
				& $\underset{\left(-15.4\%\right)}{\nd{101.0115}}$
				& $\underset{\left(-68.2\%\right)}{\nd{37.9191}}$
				& {$119.39$}\tabularnewline
				\bottomrule
			\end{tabular*}
		}{\par}
		\par\end{centering}
	\caption{\label{tab:Li_Moriggia} Pricing results for the benchmark binomial trees of Li and of Moriggia et al.\ for different strikes and tree sizes. The entries report the prices and, in parentheses, the relative percentage  errors (RPE) with respect to the PDE benchmark price. In both cases, the tree is constructed from the exact local-volatility function.}

\vspace{0.8\baselineskip}
	\begin{centering}
			\small{	\begin{tabular*}{\textwidth}{@{\extracolsep{\fill}}l
			R{1.25cm}R{1.25cm}R{1.25cm}R{1.25cm}
			@{\hspace{1.2em}}
			R{1.25cm}R{1.25cm}R{1.25cm}R{1.25cm}
			@{\hspace{1.2em}}
			R{1.15cm}@{}}
		\toprule
		&
		\multicolumn{4}{c}{ Training grid $10\times 20$ (RPE)}
		&
		\multicolumn{4}{c}{ Training grid $3\times 6$ (RPE)}
		&
		PDE
		\tabularnewline
		
		{$\hfill N_T=$}
		& {$45$} & {$90$} & {$180$} & {$360$}
		& {$45$} & {$90$} & {$180$} & {$360$}
		& {(price)}
		\tabularnewline
		
		\cmidrule(lr){2-5}\cmidrule(lr){6-9}\cmidrule(lr){10-10}
		{$K=500$}
 & $\underset{\left(0.1\%\right)}{\nd{541.2939}}$
& $\underset{\left(0.0\%\right)}{\nd{540.6756}}$
& $\underset{\left(0.0\%\right)}{\nd{540.5831}}$
& $\underset{\left(-0.0\%\right)}{\nd{540.5359}}$
	
 & $\underset{\left(-0.1\%\right)}{\nd{539.9274}}$
& $\underset{\left(-0.0\%\right)}{\nd{540.4579}}$
& $\underset{\left(-0.0\%\right)}{\nd{540.4109}}$
& $\underset{\left(-0.0\%\right)}{\nd{540.4982}}$
		
		& {$540.54$}
		\tabularnewline
		
		{$K=750$}
 & $\underset{\left(-0.8\%\right)}{\nd{368.2319}}$
& $\underset{\left(0.0\%\right)}{\nd{371.3390}}$
& $\underset{\left(-0.0\%\right)}{\nd{371.1721}}$
& $\underset{\left(0.0\%\right)}{\nd{371.3587}}$
		
 & $\underset{\left(0.8\%\right)}{\nd{374.0947}}$
& $\underset{\left(0.3\%\right)}{\nd{372.4149}}$
& $\underset{\left(-0.3\%\right)}{\nd{370.1848}}$
& $\underset{\left(-0.0\%\right)}{\nd{371.1462}}$
		& {$371.29$}
		\tabularnewline
		
			{$K=1000$}
 & $\underset{\left(-0.2\%\right)}{\nd{252.6711}}$
& $\underset{\left(0.2\%\right)}{\nd{253.6547}}$
& $\underset{\left(0.2\%\right)}{\nd{253.4737}}$
& $\underset{\left(-0.0\%\right)}{\nd{252.9792}}$

& $\underset{\left(0.8\%\right)}{\nd{255.0683}}$
& $\underset{\left(-0.3\%\right)}{\nd{252.2769}}$
& $\underset{\left(-0.4\%\right)}{\nd{252.0827}}$
& $\underset{\left(-0.2\%\right)}{\nd{252.5243}}$
		& {$253.07$}
		\tabularnewline
		
		{$K=1250$}
 & $\underset{\left(-1.0\%\right)}{\nd{171.3807}}$
& $\underset{\left(-0.7\%\right)}{\nd{171.9215}}$
& $\underset{\left(0.2\%\right)}{\nd{173.4355}}$
& $\underset{\left(0.1\%\right)}{\nd{173.2792}}$

& $\underset{\left(0.4\%\right)}{\nd{173.7427}}$
& $\underset{\left(-0.4\%\right)}{\nd{172.4238}}$
& $\underset{\left(-0.5\%\right)}{\nd{172.1982}}$
& $\underset{\left(-0.3\%\right)}{\nd{172.6150}}$
		& {$173.12$}
		\tabularnewline
		
		{$K=1500$}
 & $\underset{\left(-0.9\%\right)}{\nd{118.2755}}$
& $\underset{\left(-1.1\%\right)}{\nd{118.0222}}$
& $\underset{\left(-0.7\%\right)}{\nd{118.6086}}$
& $\underset{\left(0.4\%\right)}{\nd{119.8624}}$
		
 & $\underset{\left(1.2\%\right)}{\nd{120.7765}}$
& $\underset{\left(-1.0\%\right)}{\nd{118.2434}}$
& $\underset{\left(-0.9\%\right)}{\nd{118.3654}}$
& $\underset{\left(-0.8\%\right)}{\nd{118.3828}}$
		& {$119.39$}
		\tabularnewline
		\bottomrule
	\end{tabular*}}
			\par\end{centering}
		\caption{\label{tab:Price_convergence}Pricing results for European call prices produced by the proposed method for different strikes and tree sizes. The entries report the prices and, in parentheses, the relative percentage errors (RPE) with respect to the PDE benchmark price. Results are shown for the two trained models corresponding to the $10\times 20$ and $3\times 6$ training grids. The cumulative number of training epochs is $N_E=8000$ and the penalization coefficient is $\lambda_{\mathrm{space}}=0$.}
	\end{table}

	The results in \Cref{tab:Price_convergence} show a clear improvement over the two other approaches discussed above. For both training configurations, the pricing errors remain small across all reported strikes and tree sizes, and no severe deterioration is observed as $N_T$ increases. The denser $10\times 20$ training grid generally provides the most accurate results, especially for the finer trees, where the relative percentage errors are very close to zero for most strikes. For the coarser $3\times 6$ training grid, the errors are slightly less uniform, with deviations of about $1\%$ in some cases, but the overall behavior remains stable and satisfactory.
	The relative percentage error tends to become larger in magnitude as the benchmark option price decreases, since the training loss is constructed from absolute pricing discrepancies rather than relative ones. This is ultimately a choice in the calibration objective and could be modified by introducing a relative-error weighting in the loss function.

    \Cref{fig:tree} compares, for the case $N_T=25$, the initial tree and the final calibrated tree, and also shows the corresponding log-deformation heat map.  Panel (a) shows that the calibration  produces a structured reshaping of the lattice across both time and state dimensions.
    Panel (b) confirms that the learned log-deformation remains relatively small around the central region of the tree, while it becomes positive in the upper branch and negative in the lower branch, with increasing magnitude as time progresses.
	
	\begin{figure}[t]
		\centering
		\includegraphics[width=\textwidth]{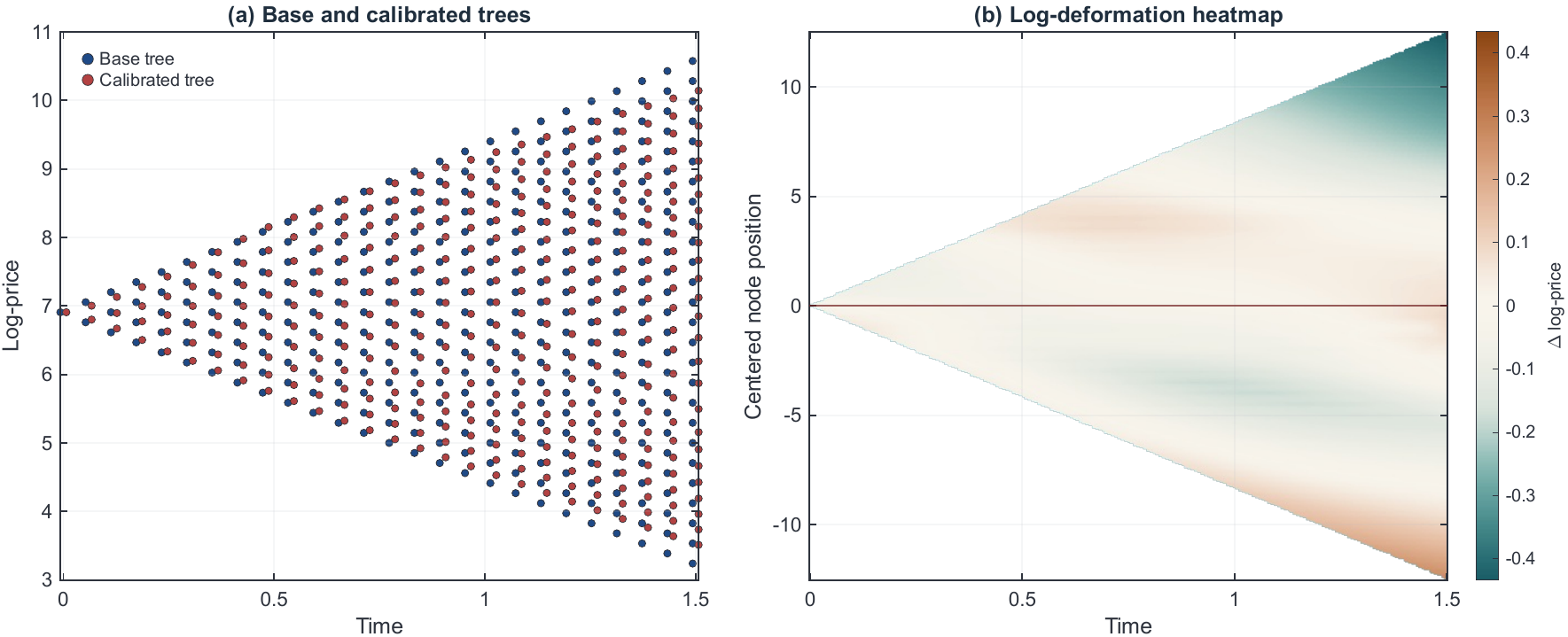}
		\caption{Base and calibrated recombining trees for $N_T=25$. Left: node locations in log-price coordinates for the benchmark and calibrated trees; a small horizontal offset is introduced for visual clarity. Right: heatmap of the log-deformation $\log S^{(\theta)}_{n,j}-\log S^{0}_{n,j}$, centered vertically for readability.}
		\label{fig:tree}
	\end{figure}

	\begin{table}
		\begin{centering}
	\small{\begin{tabular*}{\textwidth}{@{\extracolsep{\fill}}llcccccccccccccccc@{}}
			\toprule
			$\hfill N_{T}=$ 
			& & \multicolumn{4}{c}{$180$}
			& & \multicolumn{4}{c}{$360$}
			& & \multicolumn{4}{c}{$720$}
			& & \tabularnewline
			
			$\hfill N_{E}=$ 
			& & $ {1000}$ & $2000$ & $4000$ & $8000$
			& & $1000$ & $ {2000}$ & $4000$ & $8000$
			& & $1000$ & $ {2000}$ & $ {4000}$ & $8000$
			& & DSCL \tabularnewline
			
			\cmidrule{3-6}\cmidrule{8-11}\cmidrule{13-16}\cmidrule{18-18}
			
			\multicolumn{18}{l}{\rule{0pt}{5mm}Input grid $N_{T}^{opt}\times N_{K}^{opt}=10\times20$}
			\tabularnewline
			
			\midrule
			
			OP IS ($\times10^{-2}$) 
			& & \nz{25.56}& \nz{18.91}& \nz{16.82}& \nz{15.38}
			& & \nz{18.38}& \nz{10.55}& \nz{7.42}& \nz{5.86}
			& & \nz{14.56}& \nz{8.39}& \nz{5.83}& \nz{4.44}
			& & $94$ \tabularnewline
			
			OP OOS ($\times10^{-2}$) 
			& & \nz{26.19}& \nz{23.02}& \nz{22.41}& \nz{22.27}
			& & \nz{17.41}& \nz{12.35}& \nz{11.81}& \nz{11.97}
			& & \nz{12.57}& \nz{8.00}& \nz{6.77}& \nz{6.41}
			& & $108$ \tabularnewline

				RT ($\times10^3$) 
			& & \nl{75}& \nl{124}& \nl{205}& \nl{346}
			& & \nl{206}& \nl{363}& \nl{627}& \nl{1131}
			& & \nl{666}& \nl{1149}& \nl{1959}& \nl{3582}
			& & \nl{9500} \tabularnewline
			
			\multicolumn{7}{l}{\rule{0pt}{5mm}Input grid $N_{T}^{opt}\times N_{K}^{opt}=3\times6$}
			& & & & & & & & & & \tabularnewline
			
			\midrule
			
			OP IS ($\times10^{-2}$) 
			& & \nz{8.95}& \nz{3.07}& \nz{1.60}& \nz{0.66}
			& & \nz{14.78}& \nz{5.54}& \nz{2.46}& \nz{0.64}
			& & \nz{17.76}& \nz{6.73}& \nz{6.73}& \nz{1.81}
			& & $162$ \tabularnewline
			
			OP OOS ($\times10^{-2}$) 
			& & \nz{186.97}& \nz{190.38}& \nz{191.03}& \nz{195.15}
			& & \nz{128.31}& \nz{126.68}& \nz{132.88}& \nz{136.93}
			& & \nz{128.01}& \nz{126.83}& \nz{126.83}& \nz{121.59}
			& & $181$ \tabularnewline

				RT ($\times10^3$) 
			& & \nl{77}& \nl{126}& \nl{205}& \nl{358}
			& & \nl{210}& \nl{358}& \nl{610}& \nl{1096}
			& & \nl{681}& \nl{1159}& \nl{1960}& \nl{3571}
			& & \nl{9500} \tabularnewline
			
				\multicolumn{7}{l}{\rule{0pt}{5mm}Input grid $N_{T}^{opt}\times N_{K}^{opt}=256\times256$}
			& & & & & & & & & & \tabularnewline
			
			\midrule
			
			OP IS ($\times10^{-2}$) 
			& & \nz{26.26}& \nz{18.41}& \nz{16.77}& \nz{16.13}		
			& & \nz{23.62}& \nz{13.81}& \nz{9.88}& \nz{8.61}			
			& & \nz{26.18}& \nz{11.43}& \nz{6.55}& \nz{4.90}
			&  $ $ \tabularnewline

			RT  ($\times10^3$) 
			& & \nl{513}& \nl{835}& \nl{1397}& \nl{2431}
			& & \nl{980}& \nl{1780}& \nl{3262}& \nl{6023}
			& & \nl{1885}& \nl{3710}& \nl{7109}& \nl{13448}
			&   \tabularnewline
			
			\bottomrule
		\end{tabular*}}

			\par\end{centering}
			\caption{\label{tab:RMSE_1}RMSE for option pricing (OP) in-sample (IS) and out-of-sample (OOS), expressed in units of \(10^{-2}\). The values of \(N_T\) are chosen so that the option maturities used in training coincide with tree layers, and no interpolation in time is required. For the \(256\times256\) input grid, only the in-sample error is reported, since the calibration grid coincides with the dense evaluation grid. Runtime (RT) is reported in units of \(10^3\) seconds. The last column reports
				the DSCL benchmark from Wang et al.~\cite{wang2025}; the corresponding runtime
				should be interpreted as indicative, since implementations and hardware
				optimization differ. }
	\end{table}
	
 \medskip

We next turn to  a comparison with the   results reported by \citet{wang2025}. The results are summarized in \Cref{tab:RMSE_1}, where the number of tree time steps is $N_T\in\{180,360,720\}$.
 We report option-pricing RMSEs for our trees both in-sample (OP IS) and out-of-sample (OP OOS),  and the total runtime (RT), for increasing cumulative training budgets $N_E$ and for different numbers of tree time steps $N_T$. As in the previous experiments, the spatial regularization is switched off, that is, $\lambda_{\mathrm{space}}=0$. The last column (DSCL) shows the values reported in \citet{wang2025}.

The in-sample pricing errors generally decrease as the cumulative training
budget \(N_E\) increases, especially on the dense \(10\times20\) grid. The
out-of-sample behavior is more nuanced: it improves clearly in the dense case,
whereas on the sparse \(3\times6\) grid it is not monotone in \(N_E\), reflecting
the weaker identifiability of the calibration problem under sparse training
data. Increasing the number of time steps tends to improve pricing accuracy, especially for the denser $10\times 20$ training grid, where RMSEs for both OP IS and OP OOS  decrease substantially as $N_T$ grows. The comparison with the DSCL benchmark is favourable in this experimental setup, especially in
out-of-sample pricing errors, although runtime figures should be interpreted as indicative because the
implementations and hardware optimizations differ.

Results in the last two rows correspond to  training on
a very dense set of option prices, 
 so the resulting in-sample errors 
may be interpreted as an approximate lower bound on the discrepancies attainable by the model, for a fixed value of $N_T$. It shows that errors should not be attributed solely to the sparsity of the training set, but also to the approximation limits of the tree discretization itself.
At the same time, the comparison with the standard $10\times20$ training grid is encouraging:
once the tree is sufficiently refined, a moderately dense set of option quotes already captures most of the information needed to identify an accurate deformation of the base lattice.

	\FloatBarrier
	\subsubsection{Local volatility surfaces}
	
	We next turn to the local-volatility structure implied
	by the calibrated tree. The quantities
	\({\treevol_{n,j}}^{(\theta)}\)  provide useful
	information, and in the
	synthetic setting, where the local volatility \(\locvol(t,S)\) is
	known, they make it possible to assess  whether the tree-implied
	local-volatility profile is consistent with the data-generating model.

	Table~\ref{tab:LV_1} shows the local-volatility diagnostics for the construction described in \Cref{sec:gpr-local-vol}. The GPR-based errors (\emph{LV N-GPR} and \emph{LV G-GPR}) are uniformly smaller than the corresponding raw errors (\emph{LV N-Raw}), indicating that the local-volatility values extracted directly from the tree retain some irregularities that are effectively smoothed out by the regression step. The final results are particularly satisfactory in the dense \(10\times20\) setting, where, for sufficiently fine trees, the reconstructed local-volatility errors become comparable with, and in some cases slightly improve upon, the benchmark reported by Wang et al. The \(3\times6\) case remains more demanding, which is consistent with the weaker identifiability of the local-volatility structure.
	
		\begin{table}
		\begin{centering}
			\small{\begin{tabular*}{\textwidth}{@{\extracolsep{\fill}}llcccccccccccccccc@{}}
					\toprule
					$\hfill N_{T}=$ 
					& & \multicolumn{4}{c}{$180$}
					& & \multicolumn{4}{c}{$360$}
					& & \multicolumn{4}{c}{$720$}
					& & \tabularnewline
					
					$\hfill N_{E}=$ 
					& & $1000$ & $2000$ & $4000$ & $8000$
					& & $1000$ & $2000$ & $4000$ & $8000$
					& & $1000$ & $2000$ & $4000$ & $8000$
					& & DSCL \tabularnewline
					
					\cmidrule{3-6}\cmidrule{8-11}\cmidrule{13-16}\cmidrule{18-18}
					
					\multicolumn{18}{l}{\rule{0pt}{5mm}Input grid $N_{T}^{opt}\times N_{K}^{opt}=10\times20$}
				  \tabularnewline
					
					\midrule
					
					LV N-Raw ($\%$)
					& & \no{1.789}& \no{1.549}& \no{1.477}& \no{1.520}
					& & \no{1.836}& \no{1.641}& \no{1.694}& \no{1.743}
					& & \no{1.591}& \no{1.397}& \no{1.380}& \no{1.328}
					& &  \tabularnewline
					
					LV N-GPR ($\%$)
					& & \no{0.938}& \no{0.774}& \no{0.699}& \no{0.656}
					& & \no{1.050}& \no{0.808}& \no{0.680}& \no{0.620}
					& & \no{0.793}& \no{0.545}& \no{0.435}& \no{0.379}
					& &  \tabularnewline

					LV G-GPR ($\%$)
					& & \no{1.095}& \no{0.839}& \no{0.696}& \no{0.625}
					& & \no{1.205}& \no{0.948}& \no{0.791}& \no{0.698}
					& & \no{0.826}& \no{0.586}& \no{0.466}& \no{0.397}
					& & $1$ \tabularnewline

					\multicolumn{18}{l}{\rule{0pt}{5mm}Input grid $N_{T}^{opt}\times N_{K}^{opt}=3\times6$}
				 \tabularnewline
					
					\midrule
					
					LV N-Raw ($\%$)
					& & \no{5.038}& \no{5.160}& \no{5.179}& \no{5.257}
					& & \no{4.139}& \no{3.966}& \no{4.108}& \no{4.275}
					& & \no{4.150}& \no{4.152}& \no{4.152}& \no{4.058}
					& &   \tabularnewline
					
					LV N-GPR ($\%$)
					& & \no{3.977}& \no{4.086}& \no{4.093}& \no{4.096}
					& & \no{3.173}& \no{3.027}& \no{3.071}& \no{3.066}
					& & \no{3.169}& \no{3.172}& \no{3.172}& \no{2.954}
					& &   \tabularnewline

					LV G-GPR ($\%$)
					& & \no{4.618}& \no{4.719}& \no{4.723}& \no{4.762}
					& & \no{3.723}& \no{3.661}& \no{3.758}& \no{3.784}
					& & \no{3.572}& \no{3.704}& \no{3.704}& \no{3.490}
					& & $2$ \tabularnewline
					
					\multicolumn{18}{l}{\rule{0pt}{5mm}Input grid $N_{T}^{opt}\times N_{K}^{opt}=256\times256$}
					 \tabularnewline
					
					\midrule
					
					LV N-Raw ($\%$)
					& & \no{2.017}& \no{1.738}& \no{1.628}& \no{1.392}
					& & \no{2.198}& \no{1.963}& \no{1.554}& \no{1.272}
					& & \no{2.568}& \no{1.670}& \no{1.377}& \no{1.218}
					& &   \tabularnewline
					
					LV N-GPR  ($\%$)
					& & \no{1.220}& \no{0.837}& \no{0.736}& \no{0.717}
					& & \no{1.559}& \no{1.115}& \no{0.824}& \no{0.622}
					& & \no{1.645}& \no{0.907}& \no{0.699}& \no{0.603}
					& &   \tabularnewline

					LV G-GPR ($\%$)
					& &\no{1.379}& \no{0.935}& \no{0.838}& \no{0.811}
					& & \no{1.812}& \no{1.161}& \no{0.877}& \no{0.688}
					& & \no{1.605}& \no{0.857}& \no{0.723}& \no{0.658}
					& &  \tabularnewline
					
					\bottomrule
			\end{tabular*}}
			\par\end{centering}
		\caption{\label{tab:LV_1}
			Relative RMSEs, expressed in percent, for the local-volatility diagnostics in the synthetic experiments, reported for different tree sizes \(N_T\) and cumulative training budgets \(N_E\), and for the two input grids \(10\times20\) and \(3\times6\). \emph{LV N-Raw} is the  relative RMSE computed directly from the tree-implied local volatility, \emph{LV N-GPR} is the relative RMSE after GPR reconstruction, and \emph{LV G-GPR} is the relative RMSE on the dense \(256\times256\) validation grid after GPR reconstruction. The last column reports the DSCL benchmark from \citet{wang2025}.}
	\end{table}

\Cref{fig:lv} compares the GPR-reconstructed tree-implied local-volatility surfaces. In the dense \(10\times20\) setting, the reconstructed surface remains close to the analytical benchmark over most of the core region of the \((T,S/S_0)\) domain. In the sparse \(3\times6\) setting, the GPR-based reconstruction is naturally less stable, but it still provides a useful qualitative diagnostic of the deformation induced by calibration. As expected, a richer set of option quotes not only improves repricing accuracy, but also stabilizes the reconstruction of the local-volatility profile.

	\begin{figure}[p] 
	\centering	
	\includegraphics[width=0.7\textwidth]{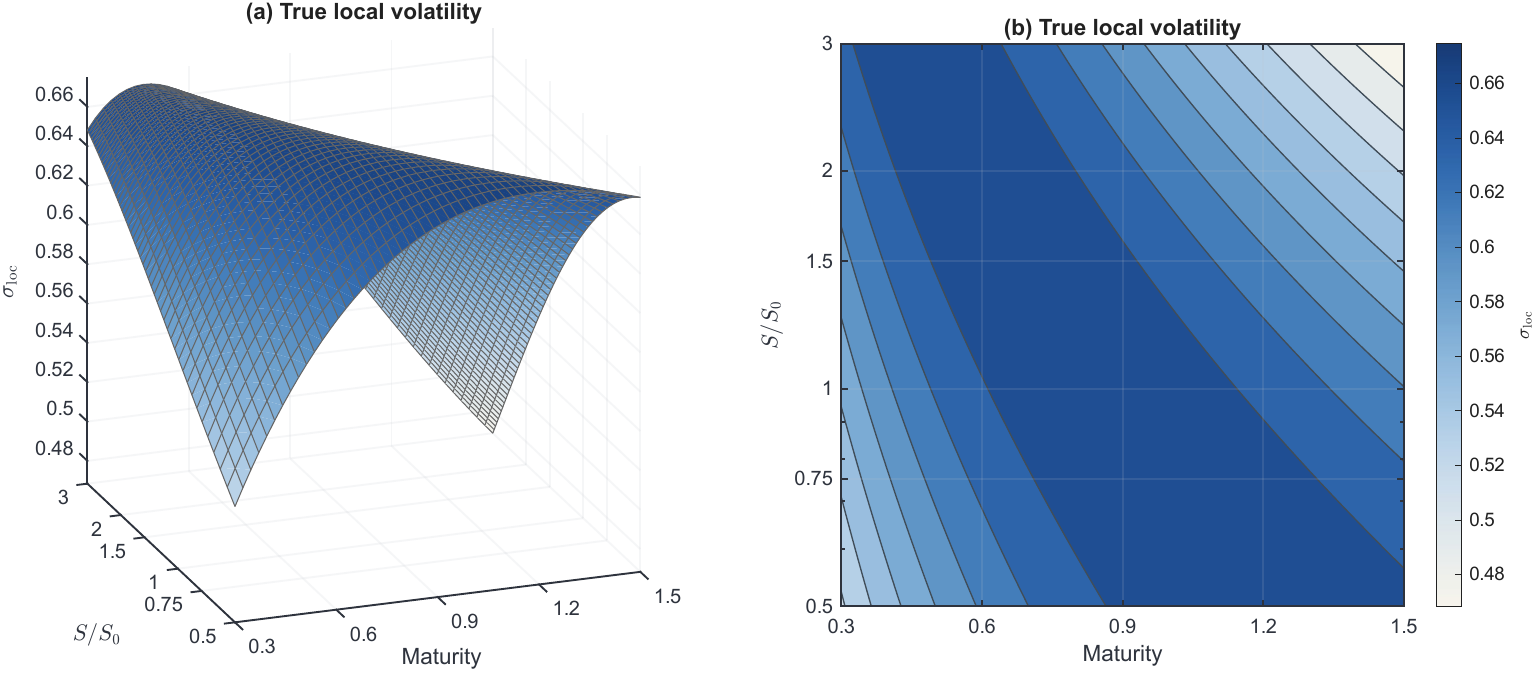}
\caption{\label{fig:lv_benchmark}
	Benchmark local-volatility surface in the synthetic setting, displayed over the common validation domain in maturity and spot ratio \(S/S_0\).  }
\end{figure}

	\begin{figure}[p]
		\centering
		\begin{subfigure}[b]{\textwidth}
			\centering
			\includegraphics[width=\textwidth]{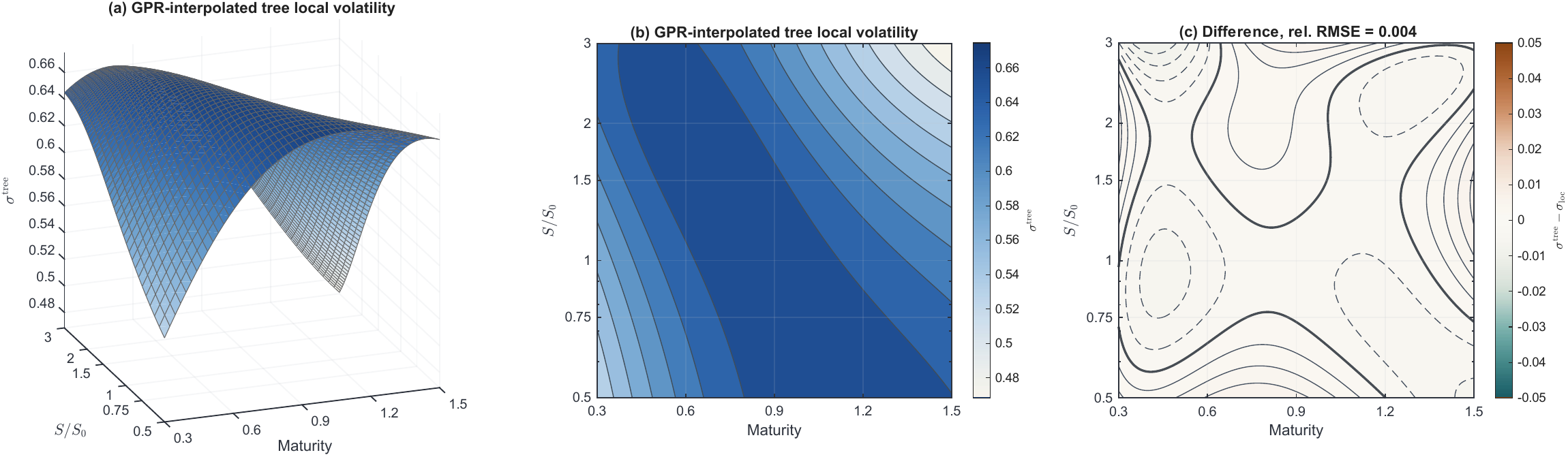}
			\caption{10x20 training grid.}
		\end{subfigure}
		
		\vspace{0.2em}
		
		\begin{subfigure}[b]{\textwidth}
			\centering
			\includegraphics[width=\textwidth]{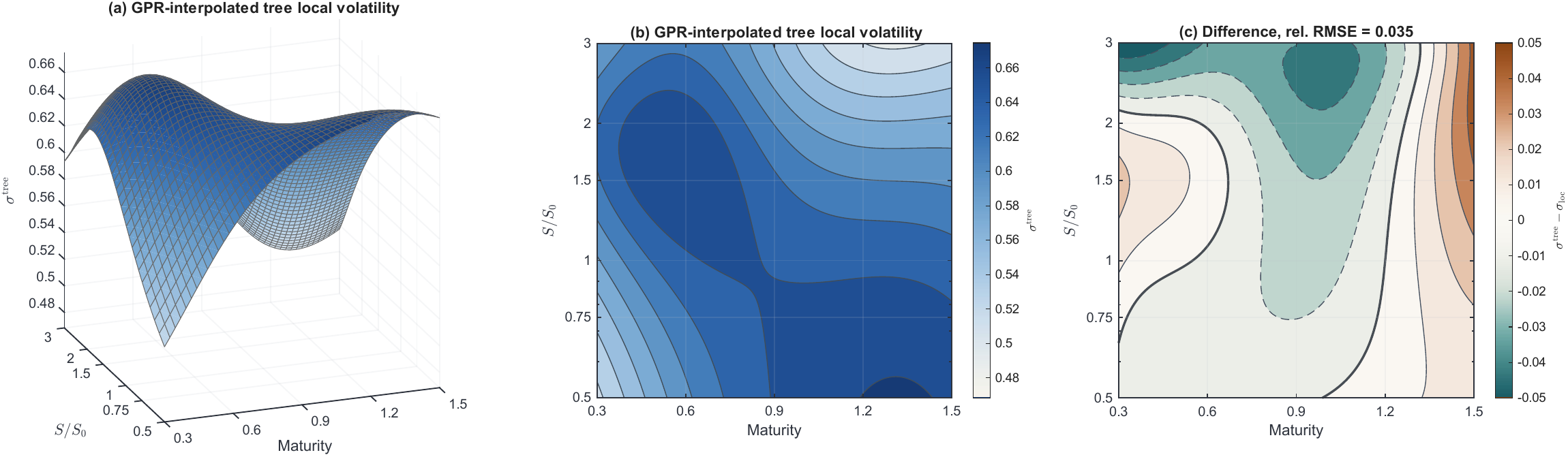}
			\caption{3x6 training grid.}
		\end{subfigure}
		
\caption{Comparison between the true local volatility surface and the
	GPR-reconstructed tree-implied local volatility surface. For each grid,
	panel (a) shows the GPR-reconstructed tree-implied local volatility surface,
	panel (b) displays the same surface as a contour plot, and panel (c) reports
	the difference \(\treevol-\locvol\). In panel
	(c), dashed contour lines correspond to negative levels, solid contour lines
	to positive levels, and the thicker solid contour denotes the zero level.
	Here \(N_T=720\), \(N_E=8000\), and \(\lambda_{\mathrm{space}}=0\).}
		\label{fig:lv}
	\end{figure}

	%\FloatBarrier
	
	\subsubsection{Impact of spatial regularization}
	
	We finally examine the role of the spatial regularization coefficient by fixing
	the cumulative training budget at the common checkpoint \(N_E=8000\) and taking
	$
	\lambda_{\mathrm{space}} \in \{0,3,10,30,100\}.
	$
	The value
	\(N_E=8000\) is used here to compare different regularization strengths at the
	same late-stage checkpoint, and should not be interpreted as a necessary
	training horizon. The purpose of this experiment is to assess whether the
	penalty term can improve the
	stability of calibration and the quality of the implied local-volatility
	diagnostics.
	
	As shown in Table~\ref{tab:lambda_pricing}, the effect of spatial
	regularization on pricing accuracy is not monotone. For the dense
	\(10\times20\) training grid, the gains are modest: regularization improves the
	out-of-sample RMSE for some tree sizes, but the unregularized calibration is
	already very competitive, especially for the finest tree. By contrast, the
	effect is much more visible for the sparse \(3\times6\) grid. In this case,
	introducing a positive value of \(\lambda_{\mathrm{space}}\) substantially
	reduces the out-of-sample pricing error for all reported tree sizes. 
	
	The non-monotonicity is consistent with the
	sensitivity analysis in \citet{wang2025}: a regularization weight that is too
	small may leave the fit essentially unregularized, whereas a weight that is too
	large may slow optimization and worsen calibration accuracy for a fixed training
	budget.
	Table~\ref{tab:lambda_lv} shows a similar pattern for the local-volatility
	diagnostics:  the penalty is most useful when the
	inverse problem is weakly identified.
     Hence \(\lambda_{\mathrm{space}}\) should not be increased mechanically, but determined after checking results for a  small grid of
	candidate values and choosing the weakest regularization that stabilizes the tree.
	
		\begin{table}
		\begin{centering}
			\small{
\begin{tabular*}{\textwidth}{@{\extracolsep{\fill}}llccccccccccccccccccc@{}}
	\toprule
	$\hfill N_{T}=$
	& & \multicolumn{5}{c}{$180$}
	& & \multicolumn{5}{c}{$360$}
	& & \multicolumn{5}{c}{$720$}
	& & \tabularnewline
	
	$\hfill \lambda_{\mathrm{space}}=$
	& & $0$ & $3$ & $10$ & $30$ & $100$
	& & $0$ & $3$ & $10$ & $30$ & $100$
	& & $0$ & $3$ & $10$ & $30$ & $100$
	& & DSCL \tabularnewline
	
	\cmidrule{3-7}\cmidrule{9-13}\cmidrule{15-19}\cmidrule{21-21}
	
	\multicolumn{21}{l}{\rule{0pt}{5mm}Input grid $N_{T}^{opt}\times N_{K}^{opt}=10\times20$}
	\tabularnewline
	
	\midrule
	
	OP IS ($\times10^{-2}$) 
	& & \nz{15.38}& \nz{15.95}& \nz{17.85}& \gc\nz{18.60}& \nz{24.78}
	& & \nz{5.86}& \gc\nz{8.04}& \nz{8.45}& \nz{12.51}& \nz{16.60}
	& & \gc\nz{4.44}& \nz{4.89}& \nz{6.64}& \nz{9.18}& \nz{14.25}
	& & $94$ \tabularnewline
	
	OP OOS ($\times10^{-2}$) 
	& & \nz{22.27}& \nz{21.65}& \nz{21.63}&  \gc\nz{20.69}& \nz{25.17}
	& & \nz{11.97}& \gc\nz{11.22} & \nz{11.36}& \nz{12.48}& \nz{15.99}
	& & \gc\nz{6.41}& \nz{6.63}& \nz{6.58}& \nz{8.02}& \nz{11.89}
	& & $108$ \tabularnewline
	
	RT ($\times10^3$)  
	& & \nl{324}& \nl{404}& \nl{437}& \gc\nl{468}& \nl{498}
	& & \nl{1198}& \gc\nl{1076}& \nl{1157}& \nl{1193}& \nl{1239}
	& & \gc\nl{3590}& \nl{3655}& \nl{3744}& \nl{3806}& \nl{3904}
	& & \nl{9500} \tabularnewline
	
	\multicolumn{21}{l}{\rule{0pt}{5mm}Input grid $N_{T}^{opt}\times N_{K}^{opt}=3\times6$}
	\tabularnewline
	
	\midrule
	
	OP IS ($\times10^{-2}$) 
	& & \nz{0.66}& \nz{2.26}& \nz{2.79}& \nz{4.66}& \gc\nz{9.85}
	& & \nz{0.64}& \nz{2.57}& \gc\nz{3.83}& \nz{4.65}& \nz{7.81}
	& & \nz{1.81}& \gc\nz{3.87}& \nz{3.16}& \nz{4.84}& \nz{8.20}
	& & $162$ \tabularnewline
	
	OP OOS  ($\times10^{-2}$) 
	& & \nz{195.15}& \nz{117.24}& \nz{105.75}& \nz{105.31}&  \gc\nz{100.97}
	& & \nz{136.93}& \nz{100.35}& \gc\nz{86.66}& \nz{87.70}& \nz{94.43}
	& & \nz{121.59}& \gc\nz{75.80}& \nz{88.17}& \nz{81.32}& \nz{89.55}
	& & $181$ \tabularnewline
	
	RT ($\times10^3$) 
	& & \nl{298}& \nl{387}& \nl{425}& \nl{447}& \gc\nl{482}
	& & \nl{1066}& \nl{1043}& \gc\nl{1114}& \nl{1147}& \nl{1183}
	& & \nl{3647}& \gc\nl{3605}& \nl{3687}& \nl{3730}& \nl{3831}
	& & \nl{9500} \tabularnewline
	
	\bottomrule 
\end{tabular*}

				}
			\par\end{centering} 
	\caption{RMSE for option pricing (OP) in-sample (IS) and out-of-sample (OOS),
		expressed in units of $10^{-2}$, together with runtime (RT), for different
		values of the spatial regularization coefficient
        $\lambda_{\mathrm{space}}$,
		with cumulative training budget fixed at 
        $N_E=8000$. Runtime is expressed in
		units of $10^3$ seconds. Grey-shaded entries identify, for each input grid and
		each fixed value of $N_T$, the value of $\lambda_{\mathrm{space}}$ that
		minimizes the OP OOS error; the corresponding OP IS and RT values are shaded
		as well.  } 
	
	\label{tab:lambda_pricing}
	\end{table}

	\begin{table}
		\begin{centering}
			
			\small{\begin{tabular*}{\textwidth}{@{\extracolsep{\fill}}llccccccccccccccccccc@{}}
				\toprule
				$\hfill N_{T}=$
				& & \multicolumn{5}{c}{$180$}
				& & \multicolumn{5}{c}{$360$}
				& & \multicolumn{5}{c}{$720$}
				& & \tabularnewline
				
				$\hfill \lambda_{\mathrm{space}}=$
				& & $0$ & $3$ & $10$ & $30$ & $100$
				& & $0$ & $3$ & $10$ & $30$ & $100$
				& & $0$ & $3$ & $10$ & $30$ & $100$
				& & DSCL \tabularnewline
				
				\cmidrule{3-7}\cmidrule{9-13}\cmidrule{15-19}\cmidrule{21-21}
				
				\multicolumn{21}{l}{\rule{0pt}{5mm}Input grid $N_{T}^{opt}\times N_{K}^{opt}=10\times20$}
				\tabularnewline
				
				\midrule
				
				LV N-Raw  ($\%$)
				& &\gc \no{1.520}& \no{1.207}& \no{1.195}& \no{1.016}& \no{1.259}
				& &\gc \no{1.743}& \no{1.188}& \no{1.176}& \no{1.136}& \no{1.273}
				& &\gc \no{1.328}& \no{1.152}& \no{1.096}& \no{1.313}& \no{1.381}
				& & \tabularnewline
				
				LV N-GPR ($\%$)
				& &\gc \no{0.656}& \no{0.720}& \no{0.827}& \no{0.830}& \no{1.076}
				& &\gc \no{0.620}& \no{0.707}& \no{0.835}& \no{0.890}& \no{1.004}
				& &\gc \no{0.379}& \no{0.716}& \no{0.762}& \no{1.046}& \no{1.153}
				& &  \tabularnewline
				
				LV G-GPR   ($\%$)
				& &\gc \no{0.625}& \no{0.800}& \no{0.922}& \no{0.846}& \no{1.209}
				& &\gc \no{0.698}& \no{0.896}& \no{1.062}& \no{1.119}& \no{1.239}
				& &\gc \no{0.397}& \no{0.915}& \no{0.967}& \no{1.377}& \no{1.469}
				& & $1$ \tabularnewline
				
				\multicolumn{21}{l}{\rule{0pt}{5mm}Input grid $N_{T}^{opt}\times N_{K}^{opt}=3\times6$}
				\tabularnewline
				
				\midrule
				
				LV N-Raw   ($\%$)
				& &  \no{5.257}& \no{3.498}& \no{3.151}& \no{3.119}&\gc \no{2.994}
				& & \no{4.275}& \no{3.069}& \no{2.958}&\gc \no{2.864}& \no{3.073}
				& & \no{4.058}&\gc \no{2.764}& \no{3.033}& \no{2.901}& \no{2.909}
				& &  \tabularnewline
				
					LV N-GPR  ($\%$)
				& & \no{4.096}& \no{2.898}& \no{2.835}& \no{2.917}&\gc \no{2.931}
				& & \no{3.066}& \no{2.501}& \no{2.505}&\gc \no{2.501}& \no{2.757}
				& & \no{2.954}&\gc \no{2.121}& \no{2.594}& \no{2.505}& \no{2.631}
				& &\tabularnewline
				
					LV G-GPR   ($\%$)
				& & \no{4.762}& \no{3.313}& \no{3.105}& \no{3.140}&\gc \no{3.067}
				& & \no{3.784}& \no{3.015}& \no{2.951}&\gc \no{2.907}& \no{3.001}
				& & \no{3.490}&\gc \no{2.516}& \no{3.124}& \no{3.010}& \no{3.037}
				& &   $2$  \tabularnewline
				
				\bottomrule 
			\end{tabular*}

			}
			\par\end{centering} 
		\caption{Relative RMSEs, expressed in percent, for the local-volatility
			diagnostics in the synthetic experiments, for different values of the spatial
			regularization coefficient $\lambda_{\mathrm{space}}$, with cumulative training
			budget fixed at $N_E=8000$. Grey-shaded entries identify,
			for each input grid and each fixed value of $N_T$, the value of
			$\lambda_{\mathrm{space}}$ that minimizes the LV G-GPR error; the corresponding
			LV N-Raw and LV N-GPR values are shaded as well. } 
		\label{tab:lambda_lv}
	\end{table}
	This is also confirmed in
	Figure~\ref{fig:lambda_lv_surfaces}, which shows the GPR-reconstructed tree-implied
	local-volatility surfaces for the same two representative regularized
	configurations.  Relative to the corresponding unregularized case, there is  no improvement in the dense setting but a better result in the sparse one. 
    
		\begin{figure}[p]
		\centering
		\includegraphics[width=0.95\textwidth]{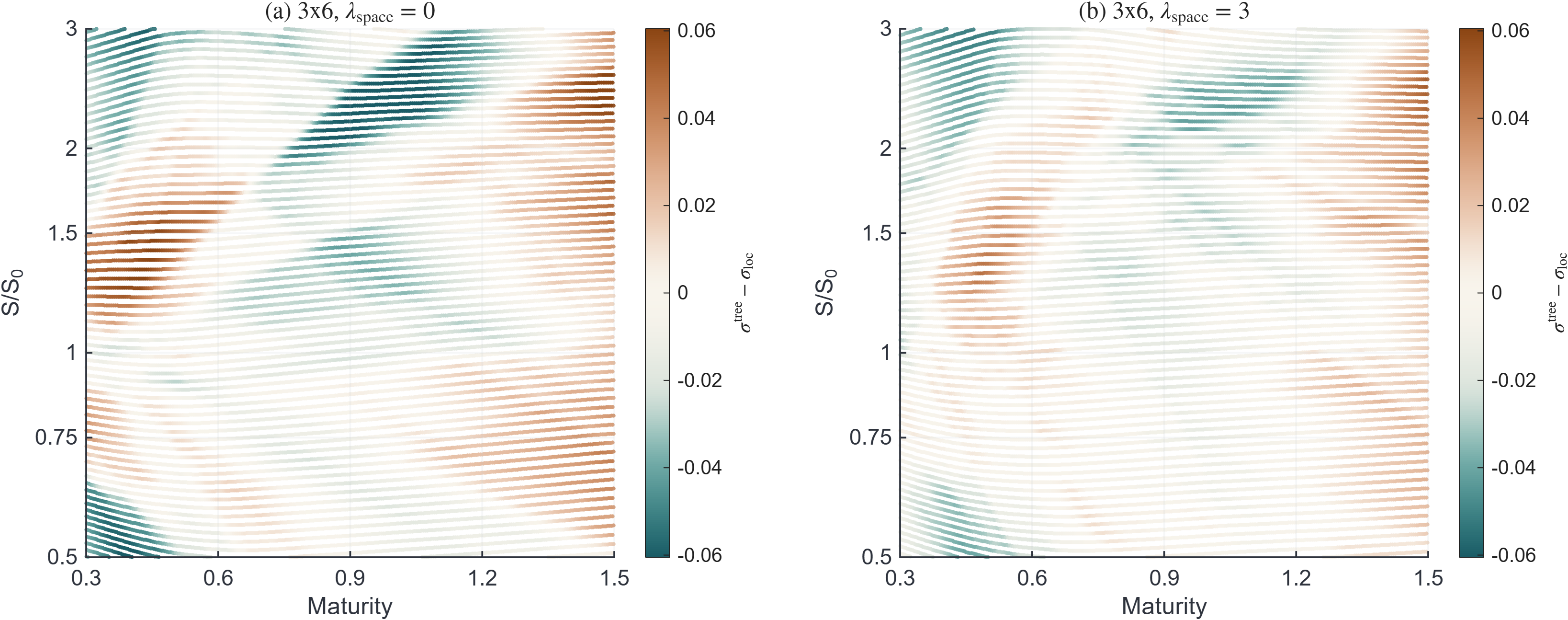}		
		\caption{\label{fig:nodewise-lv-errors}
			Local-volatility error for the $3\times 6$ training grid with $N_T=720$, comparing the unregularized case $\lambda_{\mathrm{space}}=0$ with the regularized case $\lambda_{\mathrm{space}}=3$. The color scale reports the pointwise error $\widehat{\sigma}_{\mathrm{tree}}-\sigma_{\mathrm{loc}}$ at the tree nodes, plotted in the maturity/spot-ratio plane. The same color scale is used in both panels. $N_T=720$ and $N_E=8000$. }
	\end{figure}
	\begin{figure}[htbp]
		\centering
		
		\begin{subfigure}[t]{0.95\textwidth}
			\centering
			\includegraphics[width=\textwidth]{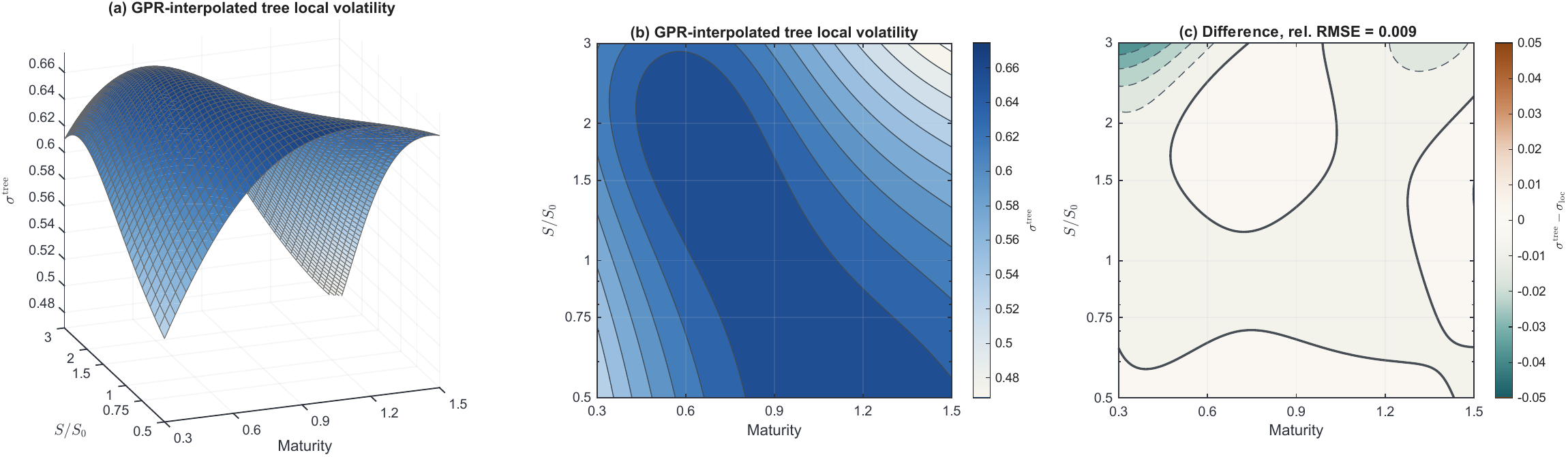}
			\caption{$10\times 20$ training grid.}
		\end{subfigure}
		
		\vspace{0.75em}
		
		\begin{subfigure}[t]{0.95\textwidth}
			\centering
			\includegraphics[width=\textwidth]{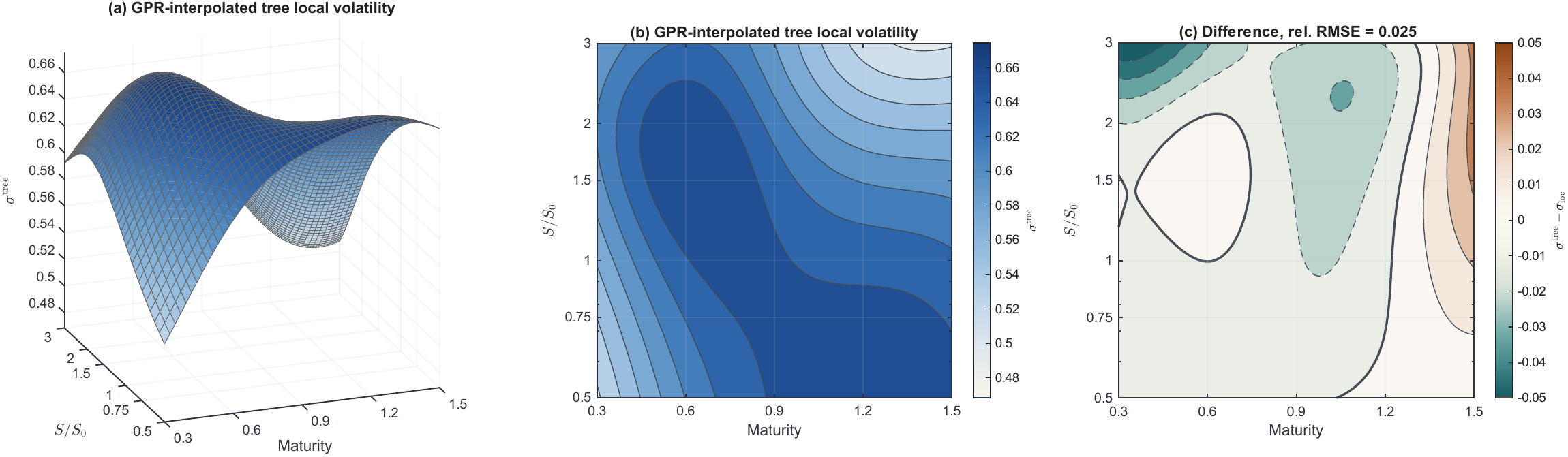}
			\caption{$3\times 6$ training grid.}
		\end{subfigure}
		
	\caption{\label{fig:lambda_lv_surfaces}Comparison between the true local-volatility surface and the
		GPR-reconstructed tree-implied local-volatility surface for representative
		regularized configurations with \(N_T=720\), \(N_E=8000\), and
		\(\lambda_{\mathrm{space}}=3\). For each configuration, panel (a) shows the
		GPR-reconstructed tree-implied local-volatility surface, panel (b) displays the
		same surface as a contour plot, and panel (c) reports the difference
		\(\treevol-\locvol\) over the common validation domain. In panel (c), dashed
		contour lines correspond to negative levels, solid contour lines to positive
		levels, and the thicker solid contour denotes the zero level.}
		
	\end{figure}
	\FloatBarrier
    
\subsection{SPX put options}

We next turn to a market-data experiment on SPX European put options. To enable a
direct comparison with the DSCL benchmark of \citet{wang2025}, we use the same
SPX dataset considered there, dated May 18, 2019.
As in that paper, dividends have been removed and the dataset is split into training and testing subsets containing
1720 and 1725 option prices, respectively, and we used \(N_E=16\,000\) cumulative training epochs for the calibration. In all cases, we use linear interpolation in maturity whenever the observed maturities do not coincide with tree dates. The main features of this market dataset are summarized in \Cref{tab:spx-data}.

	\begin{table}
		\centering
		\begin{tabular*}{\textwidth}{@{\extracolsep{\fill}}lllllll@{}}
			\toprule
			Symbol & Meaning & Value & \quad & Symbol & Meaning & Value\tabularnewline
			\midrule
			$S_0$ & spot price & $2859.53$ & & $N_{\mathrm{train}}$ & training option prices & $1720$\tabularnewline
			$r$ & risk-free rate & $0.023$ & & $N_{\mathrm{test}}$ & testing option prices & $1725$\tabularnewline
			$T$ & maturity range & $[0.055,\,2.5]$ & & $K$ & strike range & $[1150,\,4000]$\tabularnewline
			%Date & market snapshot & May 18, 2019 & & Option type & quoted contracts & SPX European puts\tabularnewline
			\bottomrule
		\end{tabular*}
		\caption{\label{tab:spx-data}Main characteristics of the SPX market dataset used in the empirical comparison with \citet{wang2025}.}
        \label{tab:dataSPX}
	\end{table}
	
	\begin{table}[t]
		\begin{centering}
					\small{\begin{tabular*}{\textwidth}{@{\extracolsep{\fill}}llccccccccccccccccccccc@{}}
				\toprule
				$\hfill N_{T}$ &
				& \multicolumn{4}{c}{$100$}
				& & \multicolumn{4}{c}{$200$}
				& & \multicolumn{4}{c}{$400$}
				& & \multicolumn{4}{c}{$800$}
				& & \tabularnewline
				
				$\hfill N_{E}\ (\times10^3)$ &
				&   $2$ & $4$ & $8$ & $16$
				& &$2$ & $4$ & $8$ & $16$
				&  &$2$ & $4$ & $8$ & $16$
				&  & $2$ & $4$ & $8$ & $16$
				& & DSCL \tabularnewline
				
				\cmidrule{3-6}\cmidrule{8-11}\cmidrule{13-16}\cmidrule{18-21}\cmidrule{23-23}
				
				OP IS  ($\times10^{-2}$) &
				& \nz{97.19}& \nz{91.91}& \nz{87.75}& \nz{84.22}
				& & \nz{62.42}& \nz{54.59}& \nz{49.21}& \nz{44.76}
				& & \nz{52.99}& \nz{44.48}& \nz{37.36}& \nz{32.18}
				& & \nz{52.42}& \nz{41.94}& \nz{34.23}& \nz{29.97}
				& & $97$ \tabularnewline
				
				OP OOS  ($\times10^{-2}$) &
				& \nz{104.30}& \nz{98.29}& \nz{94.28}& \nz{92.08}
				& & \nz{69.55}& \nz{60.59}& \nz{55.69}& \nz{52.21}
				& & \nz{61.67}& \nz{48.86}& \nz{40.86}& \nz{35.97}
				& & \nz{59.33}& \nz{45.87}& \nz{37.39}& \nz{33.95}
				& & $326$ \tabularnewline
 
							RT  ($\times10^3$) 
				& & \nl{57}& \nl{90}& \nl{143}& \nl{249}
				& & \nl{164}& \nl{290}& \nl{522}& \nl{988}
				& & \nl{478}& \nl{831}& \nl{1526}& \nl{2894}
				& & \nl{1416}& \nl{2469}& \nl{4555}& \nl{8678}
				&   \tabularnewline
				
				\bottomrule
		\end{tabular*}}
			\par\end{centering}
		\caption{\label{tab:spx_rmse}
			Option-pricing RMSEs,  expressed in units of $10^{-2}$, for the SPX experiment, reported in-sample (OP IS) and out-of-sample (OP OOS) for the proposed method at different tree sizes $N_T$ and cumulative training epochs $N_E$, with $\lambda_{\mathrm{space}}=0$. The last column reports the benchmark values from Wang et al. ~\cite{wang2025}.
		}
	\end{table}
	
	Table~\ref{tab:spx_rmse} reports the SPX pricing errors obtained with the
	proposed method for cumulative training budgets
	\(N_E\in\{2000,4000,8000,16000\}\) and with spatial regularization switched
	off. The results show a clear improvement as the number of tree time steps
	increases: both the in-sample and out-of-sample RMSE decrease substantially
	when passing from \(N_T=100\) to \(N_T=800\). The comparison with Wang et al.\
	is favorable. In particular, the proposed method yields lower out-of-sample RMSEs than the DSCL benchmark for all reported values of \(N_T\), while the in-sample metric improves markedly as the tree is refined.  Increasing $N_T$ produces a substantial reduction in both in-sample and out-of-sample RMSE, while the gains from increasing $N_E$ beyond the first stages are relatively moderate. This shows that  tree resolution is more critical than prolonged training once the main deformation of the base tree has already been learned.

	To complement the RMSE values reported in Table~\ref{tab:spx_rmse},  Figure~\ref{fig:spx_pricing_errors} displays pricing errors and percent errors for the calibrated tree with $N_T=800$ and $N_E=16\,000$, separately for in-sample and out-of-sample quotes over the maturity--moneyness plane. The top row reports pricing errors and the bottom row the corresponding relative errors as a percentage. In each row, the left panel refers to in-sample quotes and the right panel to out-of-sample quotes. 
	
	The in-sample and out-of-sample pricing-error maps broadly show a similar structure, which suggests that the calibrated tree does not merely overfit the training quotes but captures a stable pricing pattern across the observed SPX surface. The larger pricing errors tend to concentrate in the short-maturity region and in the more extreme moneyness zones. 
	\begin{figure} 
				\centering
		\begin{subfigure}[t]{1\textwidth}
			\centering
			\includegraphics[width=0.98\textwidth]{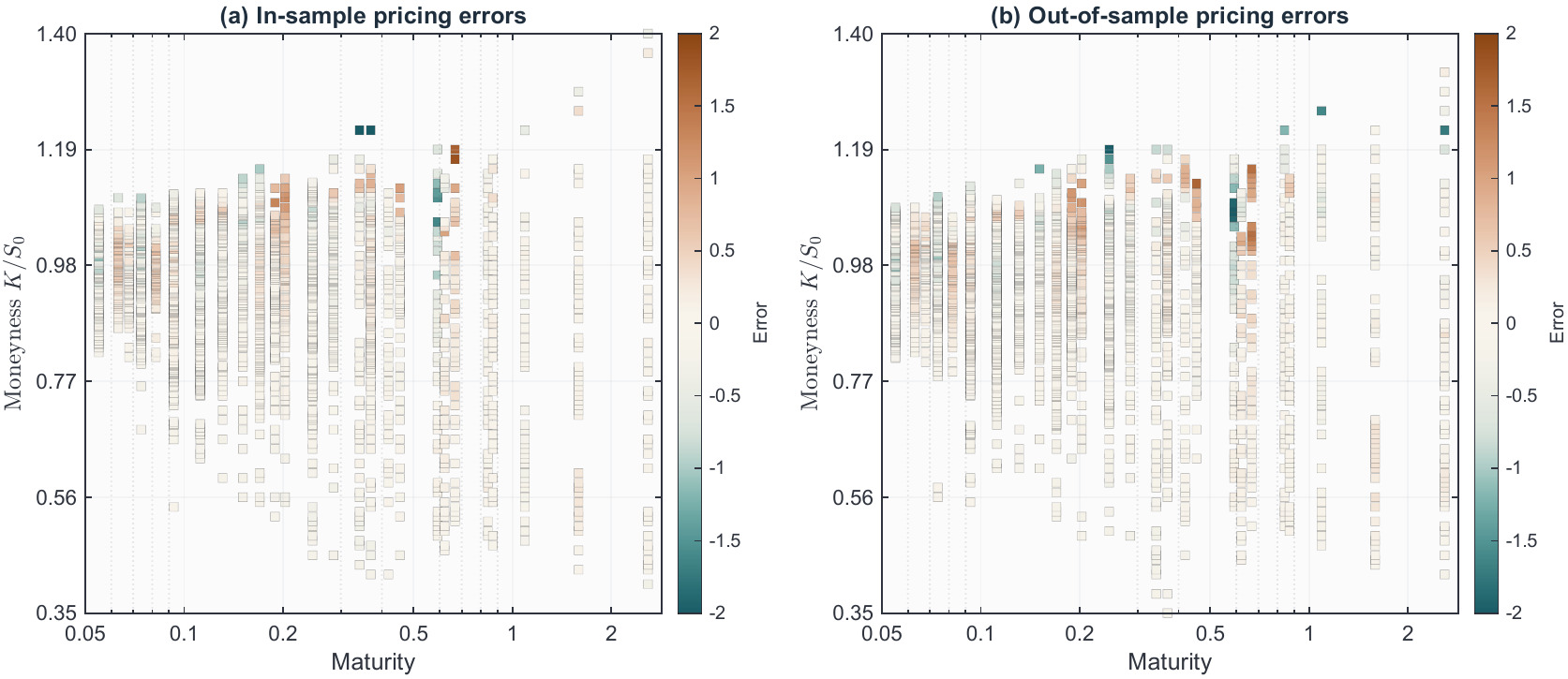} 
			\caption{Error}
			\label{fig:heatmap1020}
		\end{subfigure}\hfill
		\begin{subfigure}[t]{1\textwidth}
			\centering 
			\includegraphics[width=0.98\textwidth]{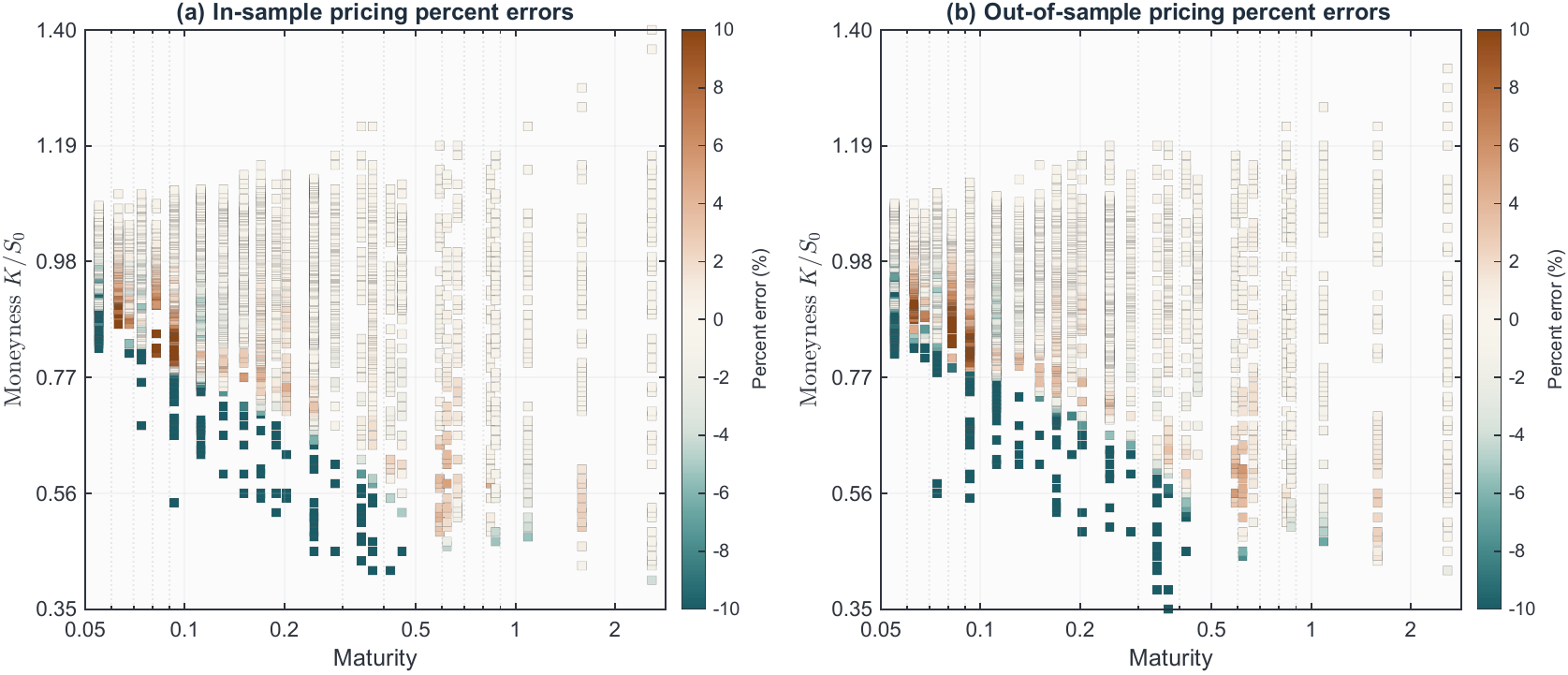}
			\caption{Percent error}
			\label{fig:heatmap0306}
		\end{subfigure}		  
		 \caption{Pricing errors and percent errors for the SPX experiment with $N_T=800$ and $N_E=16\,000$. The top row reports pricing errors for in-sample quotes (left) and out-of-sample quotes (right), while the bottom row reports the corresponding percent errors. Each square corresponds to one market option quote, positioned by maturity and moneyness. The color scales are fixed across the error panels and across the percent-error panels.}\label{fig:spx_pricing_errors}
	\end{figure}

	\Cref{fig:spx_lv_0e0} displays the local-volatility structure implied by the calibrated SPX tree for the representative case $N_T=800$ and $N_E=16\,000$.
    % Since no reference local-volatility surface is available in market data, the figure should be interpreted as a qualitative diagnostic rather than as a pointwise validation tool.
    The left panel shows the local-volatility estimates extracted directly from the calibrated tree, the central panel shows  the estimated local volatility surface, and the right panel displays the corresponding GPR-reconstructed contour plot.  The reconstructed surface is quite regular and its  shape is close to the SPX local-volatility surface displayed by \citet{wang2025}, so the proposed method produces a market-implied volatility that is broadly consistent with that benchmark.
	 
	\begin{figure} 
		\centering
		\includegraphics[width=0.95\textwidth]{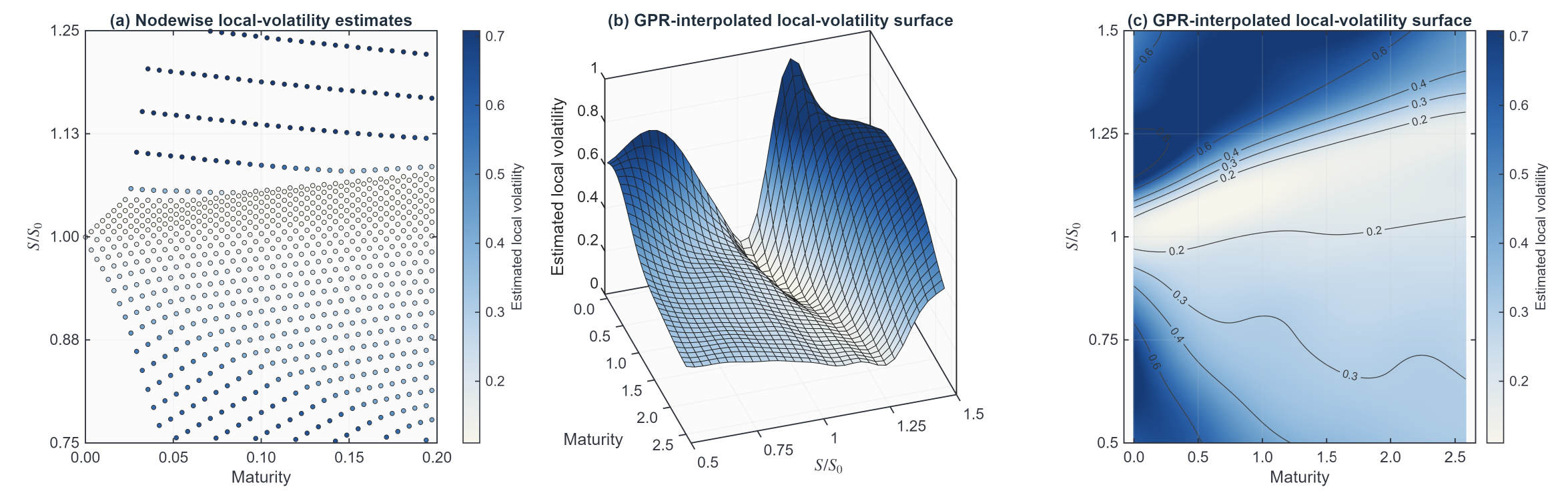} 
		\caption{\label{fig:spx_lv_0e0}Local-volatility diagnostic for the SPX experiment with $N_T=800$, $N_E=16\,000$ and $\lambda_{\mathrm{space}}=0$. Left: nodes of the calibrated tree, with color indicating the corresponding tree-implied local-volatility estimate at each node. Center and right: 3D and contour representations of the GPR-reconstructed local-volatility surface. }
	\end{figure}

As a final robustness check, Table~\ref{tab:spx_rmse_lambda} reports the SPX
pricing errors obtained by varying the spatial regularization coefficient while
keeping \(N_E=16\,000\) fixed. The behavior is very regular: both in-sample and
out-of-sample RMSE values decrease as the number of time steps increases, and the
dependence on \(\lambda_{\mathrm{space}}\) is mild. Compared with the
unregularized results, spatial regularization produces only marginal changes in
pricing accuracy. The best regularized result is very close to the
unregularized one, which suggests that, for the large and informative SPX
dataset, the calibration is already stable without spatial penalization. In
such settings, the natural default is therefore \(\lambda_{\mathrm{space}}=0\), but the results improve or do not deteriorate too much when it is taken differently from zero.

		\begin{table}[t]
		\begin{centering}
			\small{\begin{tabular*}{\textwidth}{@{\extracolsep{\fill}}llccccccccccccccccccccc@{}}
					\toprule
					$\hfill N_{T}=$ &
					& \multicolumn{4}{c}{$100$}
					& & \multicolumn{4}{c}{$200$}
					& & \multicolumn{4}{c}{$400$}
					& & \multicolumn{4}{c}{$800$}
					& & \tabularnewline
					
					$\hfill \lambda_{\mathrm{space}}=$ &
					& $0$ & $0.01$ & $0.03$ & $0.1$
					& & $0$ & $0.01$ & $0.03$ & $0.1$
					& & $0$ & $0.01$ & $0.03$ & $0.1$
					& & $0$ & $0.01$ & $0.03$ & $0.1$
					& & DSCL \tabularnewline
					
					\cmidrule{3-6}\cmidrule{8-11}\cmidrule{13-16}\cmidrule{18-21}\cmidrule{23-23} 
					
					OP IS ($\times10^{-2}$)&
					& \nz{84.22}&\gc \nz{85.26}& \nz{85.31}& \nz{83.95}
					& & \nz{44.76}& \nz{44.21}& \gc\nz{44.07}& \nz{45.37}
					& & \nz{32.18}& \gc\nz{32.25}& \nz{32.62}& \nz{34.02}
					& & \nz{29.97}& \nz{30.37}& \gc\nz{29.63}& \nz{31.66}
					& & $97$ \tabularnewline
					
					OP OOS ($\times10^{-2}$)&
					& \nz{92.08}& \gc\nz{91.29}& \nz{91.40}& \nz{92.71}
					& & \nz{52.21}& \nz{51.08}& \gc\nz{50.35}& \nz{52.74}
					& & \nz{35.97}& \gc\nz{35.84}& \nz{36.37}& \nz{39.47}
					& & \nz{33.95}& \nz{34.58}& \gc\nz{33.94}& \nz{35.50}
					& & $326$ \tabularnewline
						RT  ($\times10^3$) 
					& & \nl{249}& \gc\nl{349}& \nl{387}& \nl{396}
					& & \nl{988}& \nl{759}&  \gc\nl{820}& \nl{862}
					& & \nl{2894}&  \gc\nl{2170}& \nl{2251}& \nl{2383}
					& & \nl{8678}& \nl{7196}&\gc \nl{7418}& \nl{7527}
					&   \tabularnewline
					
					\bottomrule
			\end{tabular*}}
			\par\end{centering}
		\caption{\label{tab:spx_rmse_lambda}
			Option-pricing RMSEs, expressed in units of $10^{-2}$,  for the SPX experiment, reported in-sample (OP IS) and out-of-sample (OP OOS) for the proposed method at different tree sizes $N_T$ and for different values of the spatial regularization coefficient $\lambda_{\mathrm{space}}$, with cumulative training budget fixed at $N_E=16\,000$. The last column reports the DSCL benchmark from Wang et al.~\cite{wang2025}. The corresponding unregularized results are reported in Table~\ref{tab:spx_rmse}.}
	\end{table}

		\begin{figure} 
		\centering
		\includegraphics[width=0.95\textwidth]{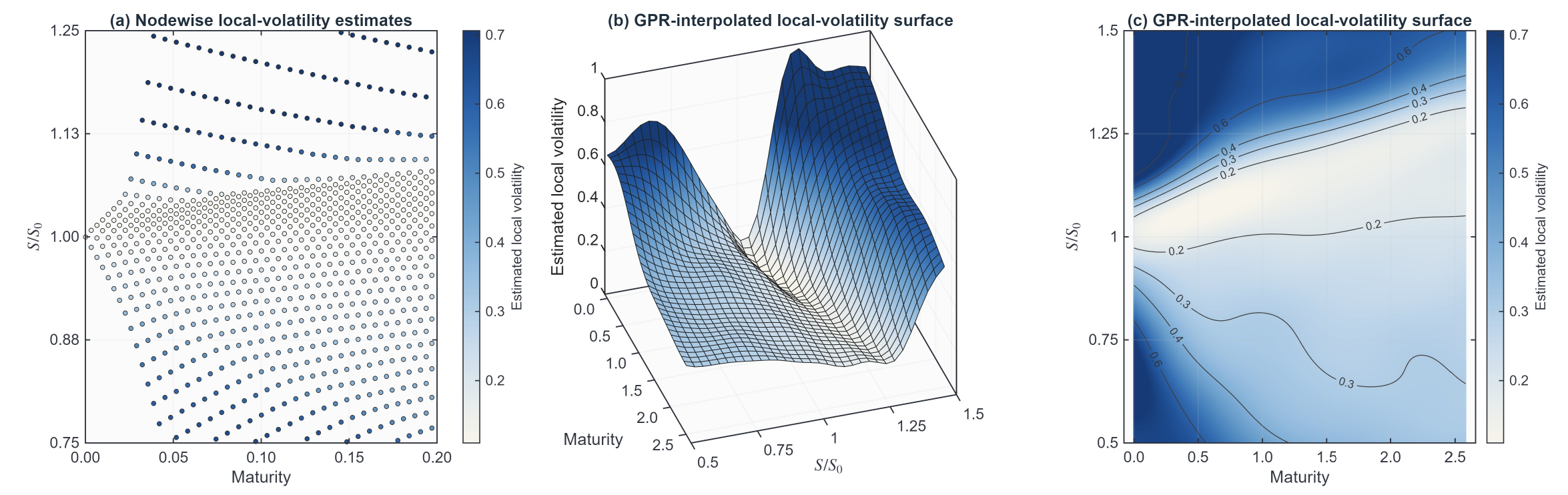} 
		\caption{\label{fig:spx_lv_3em2}Local-volatility diagnostic for the SPX experiment with $N_T=800$, $N_E=16\,000$ and $\lambda_{\mathrm{space}}=0.03$. Left: nodes of the calibrated tree, with color indicating the corresponding tree-implied local-volatility estimate at each node. Center and right: 3D and contour representations of the GPR-reconstructed local-volatility surface. }
	\end{figure}
	
	%\FloatBarrier
	\subsection{American put options and early exercise}
	\label{subsec:american_options}
	
	\subsubsection{Post-calibration valuation of American puts}
	\label{subsubsec:american_post_calibration}
	
	One practical advantage of the proposed methodology is that the resulting recombining tree can be reused to value new contracts by standard backward induction once calibration has been completed,  without solving a PDE or training a new neural model. This feature is particularly relevant for derivatives with early-exercise features, such as American options.
	
	To illustrate this point, we consider the trees calibrated in the synthetic-data experiment using European calls, and reuse them to price American put options. We compare the prices to the ones generated by solving a PDE with free boundary. We use a dense out-of-sample grid with $256$ maturities and $256$ strikes, with maturities uniformly distributed in $[0.3,1.5]$ and strikes uniformly distributed in $[500,3000]$.

	The RMSE values reported in
	Table~\ref{tab:am_perfmap} show
   that the
	 American-put valuation errors are not uniformly distributed over the
	 maturity-moneyness domain, but are concentrated in localized regions, especially
	 for the sparse $3\times6$ training grid. The dominant source of improvement is the refinement
	of the tree in time: increasing $N_T$ produces a substantial reduction in
	out-of-sample pricing RMSE for both training configurations, whereas the gains
	from increasing $N_E$ become comparatively milder once the main deformation of
	the benchmark tree has already been learned. In particular, the $10\times20$
	training grid exhibits a marked improvement as $N_T$ increases, while the
	$3\times6$ grid remains significantly less accurate.

	\begin{table} 
		\centering
			\small{\begin{tabular*}{\textwidth}{@{\extracolsep{\fill}}llccccccccccccccc@{}}
				\toprule
				$\hfill N_{T}=$ & 
				& \multicolumn{4}{c}{$180$}
				& & \multicolumn{4}{c}{$360$}
				& & \multicolumn{4}{c}{$720$} 
				& \tabularnewline
				
				$\hfill N_{E}=$ &
				& $1000$ & $2000$ & $4000$ & $8000$
				& & $1000$ & $2000$ & $4000$ & $8000$
				& & $1000$ & $2000$ & $4000$ & $8000$ 
				&  \tabularnewline
				
				\cmidrule{3-6}\cmidrule{8-11}\cmidrule{13-16}
				
				\multicolumn{12}{l}{\rule{0pt}{5mm}Input grid $N_{T}^{opt}\times N_{K}^{opt}=10\times20$ European call options}
				& & & &  \tabularnewline

				OP OOS  ($\times10^{-2}$) &
				& \nz{23.82}& \nz{22.68}& \nz{22.39}& \nz{22.39}
				& & \nz{14.61}& \nz{12.47}& \nz{12.35}& \nz{12.42}
				& & \nz{11.49}& \nz{10.33}& \nz{9.90}& \nz{10.36}
				&  \tabularnewline

				\multicolumn{12}{l}{\rule{0pt}{5mm}Input grid $N_{T}^{opt}\times N_{K}^{opt}=3\times6$  European call options}
				& & & &  \tabularnewline

				OP OOS  ($\times10^{-2}$) &
				
				& \nz{162.11}& \nz{163.67}& \nz{163.99}& \nz{167.58}
				& & \nz{98.57}& \nz{95.54}& \nz{100.12}& \nz{106.76}
				& & \nz{104.12}& \nz{100.54}& \nz{100.46}& \nz{93.85}
				& \tabularnewline 
				
				\bottomrule
		\end{tabular*}}
		\caption{Out-of-sample option-pricing RMSE, expressed in units of $10^{-2}$,  for American put options as a function of the number of tree time steps $N_T$ and the cumulative number of training epochs $N_E$, for the two synthetic training grids made of European call options. $\lambda_{\mathrm{space}}=0$. }
		\label{tab:am_perfmap}
	\end{table}
	
	The effect of spatial regularization on this post-calibration American-put
	valuation test is reported in Table~\ref{tab:lambda_pricing_am_eu}. The dense
	$10\times20$ training grid is already sufficiently informative, and the
	regularization has only a limited effect on the American put out-of-sample
	errors. In this case, the best values are obtained either without regularization
	or with a moderate value of $\lambda_{\mathrm{space}}$, depending on the tree
	resolution. By contrast, the sparse $3\times6$ grid benefits substantially from
	a positive spatial penalty. For all reported values of $N_T$, suitable
	positive values of $\lambda_{\mathrm{space}}$ lead to a sizeable reduction in
	the out-of-sample RMSE. This again confirms that the spatial regularization is
	   useful when the calibration data are too sparse to
	identify a stable tree deformation.
	
		\begin{table}
		\begin{centering}
			\small{
				\begin{tabular*}{\textwidth}{@{\extracolsep{\fill}}llcccccccccccccccccc@{}}
					\toprule
					$\hfill N_{T}=$
					& & \multicolumn{5}{c}{$180$}
					& & \multicolumn{5}{c}{$360$}
					& & \multicolumn{5}{c}{$720$}
					&    \tabularnewline
					
					$\hfill \lambda_{\mathrm{space}}=$
					& & $0$ & $3$ & $10$ & $30$ & $100$
					& & $0$ & $3$ & $10$ & $30$ & $100$
					& & $0$ & $3$ & $10$ & $30$ & $100$
					&    \tabularnewline
					
					\cmidrule{3-7}\cmidrule{9-13}\cmidrule{15-19}
					
					\multicolumn{19}{l}{\rule{0pt}{5mm}Input grid $N_{T}^{opt}\times N_{K}^{opt}=10\times20$   European call options}
					&    \tabularnewline

					OP OOS ($\times10^{-2}$) 
					&& \nz{22.39}& \nz{22.92}& \gc\nz{19.97}& \nz{22.57}& \nz{32.11}
					&& \gc\nz{12.42}& \nz{16.60}& \nz{20.67}& \nz{21.40}& \nz{30.09}
					&& \gc\nz{10.36}& \nz{15.42}& \nz{21.90}& \nz{26.47}& \nz{29.08}
					&    \tabularnewline 
					\multicolumn{19}{l}{\rule{0pt}{5mm}Input grid $N_{T}^{opt}\times N_{K}^{opt}=3\times6$   European call options}
					&    \tabularnewline

					OP OOS ($\times10^{-2}$) 
					&& \nz{167.62}& \nz{97.58}& \nz{86.73}& \nz{86.52}&\gc \nz{82.12}
					&& \nz{106.81}& \nz{76.32}&\gc \nz{62.95}& \nz{64.25}& \nz{73.91}
					&& \nz{93.92}&\gc \nz{56.09}& \nz{67.54}& \nz{62.78}& \nz{67.80}
					&  \tabularnewline
					 \bottomrule
				\end{tabular*}
			}
			\par\end{centering} 
		\caption{Out-of-sample option-pricing RMSE for American put options when the
			tree is trained on synthetic European call option prices. Results are reported
			for two training grids, different tree resolutions $N_T$, and different values
			of  $\lambda_{\mathrm{space}}$. Entries are
			expressed in units of $10^{-2}$; highlighted values denote the best result
			within each $N_T$ block. All calibrations use $N_E=8000$ training epochs.}
		\label{tab:lambda_pricing_am_eu} 
	\end{table}

\subsubsection{Direct calibration on American put prices}
\label{subsubsec:american_direct_calibration}

We now consider a second experiment involving American put options. Differently
from Section~\ref{subsubsec:american_post_calibration}, where the tree is
calibrated on European call prices and subsequently reused to price American
puts, now the objective function for calibration  is built from American put prices with the same maturities and strikes as in Section~\ref{sec:sy_da_ex}.

		\begin{table}
		\begin{centering}
			\small{\begin{tabular*}{\textwidth}{@{\extracolsep{\fill}}llccccccccccccccc@{}}
					\toprule
					$\hfill N_{T}=$ 
					& & \multicolumn{4}{c}{$180$}
					& & \multicolumn{4}{c}{$360$}
					& & \multicolumn{4}{c}{$720$}
					&   \tabularnewline
					
					$\hfill N_{E}=$ 
					& & $ {1000}$ & $2000$ & $4000$ & $8000$
					& & $1000$ & $ {2000}$ & $4000$ & $8000$
					& & $1000$ & $ {2000}$ & $ {4000}$ & $8000$
					& \tabularnewline
					
					\cmidrule{3-6}\cmidrule{8-11}\cmidrule{13-16} 
					
					\multicolumn{17}{l}{\rule{0pt}{5mm}Input grid $N_{T}^{opt}\times N_{K}^{opt}=10\times20$  American put options}
					\tabularnewline
					
					\midrule
					
					OP IS ($\times10^{-2}$)  
					& & \nz{16.16}& \nz{10.81}& \nz{9.30}& \nz{7.95}
					& & \nz{12.07}& \nz{7.14}& \nz{5.10}& \nz{3.91}
					& & \nz{11.72}& \nz{5.54}& \nz{4.05}& \nz{3.16}
					&   \tabularnewline
					
					OP OOS ($\times10^{-2}$) 
					& & \nz{23.63}& \nz{21.56}& \nz{21.66}& \nz{21.87}
					& & \nz{12.57}& \nz{10.16}& \nz{10.12}& \nz{10.41}
					& & \nz{10.82}& \nz{6.38}& \nz{5.46}& \nz{5.08}
					&    \tabularnewline

					RT ($\times10^3$) 
					& & \nl{335}& \nl{654}& \nl{1270}& \nl{2502}
					& & \nl{700}& \nl{1397}& \nl{2791}& \nl{5570}
					& & \nl{1912}& \nl{3898}& \nl{7918}& \nl{15985}
					&   \tabularnewline
					
					\multicolumn{17}{l}{\rule{0pt}{5mm}Input grid $N_{T}^{opt}\times N_{K}^{opt}=3\times6$  American put options}
					 \tabularnewline
					
					\midrule
					
					OP IS ($\times10^{-2}$) 
					& & \nz{5.91}& \nz{2.51}& \nz{1.35}& \no{0.35}
					& & \nz{9.51}& \nz{3.49}& \nz{1.11}& \no{0.23}
					& & \nz{5.43}& \nz{2.11}& \nz{1.51}& \no{0.21}
					&    \tabularnewline
					
					OP OOS ($\times10^{-2}$) 
					& & \nz{134.01}& \nz{133.69}& \nz{134.41}& \nz{133.25}
					& & \nz{109.89}& \nz{112.14}& \nz{113.54}& \nz{114.19}
					& & \nz{113.81}& \nz{110.29}& \nz{109.91}& \nz{119.41}
					&  \tabularnewline

					RT ($\times10^3$) 
					& & \nl{312}& \nl{616}& \nl{1222}& \nl{2432}
					& & \nl{684}& \nl{1345}& \nl{2663}& \nl{5298}
					& & \nl{1733}& \nl{3424}& \nl{6790}& \nl{13529}
					&    \tabularnewline
										
					\bottomrule
			\end{tabular*}}

			\par\end{centering}
		\caption{\label{tab:RMSE_am_put}RMSE for American option pricing (OP)  in-sample (IS) and out-of-sample (OOS), expressed in units of $10^{-2}$.  Runtime (RT) is
			reported in units of \(10^3\) seconds. Both training and testing are performed on American put options. $\lambda_{\mathrm{space}}=0$.}
	\end{table}

The results in Table~\ref{tab:RMSE_am_put} confirm that the proposed calibration
procedure can also be trained directly on American put prices. The qualitative behavior is similar to that observed in the European-call
experiments. Increasing the number of tree time steps improves pricing accuracy
clearly on the dense grid, while the sparse grid remains more sensitive and less
monotone out of sample. Increasing the training budget mainly refines the
in-sample fit once the dominant deformation of the base tree has already been
learned.

For the dense \(10\times20\) grid, direct calibration on American put prices
gives accurate results. The out-of-sample RMSE decreases from
\(22\times10^{-2}\) for \(N_T=180\) and \(N_E=8000\) to
\(5\times10^{-2}\) for \(N_T=720\) and \(N_E=8000\). This is substantially
better than the American-put valuation test reported in
Table~\ref{tab:am_perfmap}, where the tree was calibrated on European calls
and then reused to price American puts. In that case, the corresponding
out-of-sample RMSE for the \(10\times20\) grid and \(N_T=720\), \(N_E=8000\)
was about \(10\times10^{-2}\). Thus, when the objective is the accurate
valuation of American puts, including the early-exercise feature directly in
the calibration roughly halves the out-of-sample pricing error in the
dense-grid case.

The same comparison is more nuanced for the sparse \(3\times6\) grid. Direct
training on American puts produces very small in-sample errors, reaching
\(0.2\times10^{-2}\) for \(N_T=720\) and \(N_E=8000\), but the out-of-sample
RMSE remains much larger, around \(119\times10^{-2}\). This is close to,
and slightly worse than, the corresponding value in Table~\ref{tab:am_perfmap},
where the tree calibrated on European calls achieved an out-of-sample RMSE of
about \(94\times10^{-2}\) for the same tree size and training budget. This
confirms that, under sparse training information, fitting American prices
directly can lead to a very accurate reproduction of the observed quotes
without necessarily improving global out-of-sample performance.

The main drawback of direct calibration on American put prices is its higher computational cost. Compared with the
European-call calibration in Table~\ref{tab:RMSE_1}, runtimes are substantially
larger, because each loss evaluation requires a backward induction with the
early-exercise obstacle rather than a simple Arrow--Debreu aggregation at
maturity. For example, in the dense \(10\times20\) case with \(N_T=720\) and
\(N_E=8000\), the runtime increases from about \(3.6\times10^3\) seconds in
the European-call experiment to about \(16.0\times10^3\) seconds when training
is performed directly on American puts. A similar pattern is observed for the
sparse grid, where the runtime rises to about \(13.5\times10^3\) seconds.

	\begin{table}
	\begin{centering}
		\small{
			\begin{tabular*}{\textwidth}{@{\extracolsep{\fill}}llcccccccccccccccccc@{}}
				\toprule
				$\hfill N_{T}=$
				& & \multicolumn{5}{c}{$180$}
				& & \multicolumn{5}{c}{$360$}
				& & \multicolumn{5}{c}{$720$}
				&    \tabularnewline
				
				$\hfill \lambda_{\mathrm{space}}=$
				& & $0$ & $3$ & $10$ & $30$ & $100$
				& & $0$ & $3$ & $10$ & $30$ & $100$
				& & $0$ & $3$ & $10$ & $30$ & $100$
				&    \tabularnewline
				
				\cmidrule{3-7}\cmidrule{9-13}\cmidrule{15-19}
				
				\multicolumn{19}{l}{\rule{0pt}{5mm}Input grid $N_{T}^{opt}\times N_{K}^{opt}=10\times20$   American put options}
				&    \tabularnewline
				
				\midrule
				
				OP IS ($\times10^{-2}$) 
				&& \nz{7.95}& \nz{10.14}& \nz{11.38}&  \gc\nz{13.99}& \nz{18.46}
				&& \nz{3.91}&  \gc\nz{4.55}& \nz{6.81}& \nz{8.58}& \nz{13.08}
				&&  \nz{3.16}& \gc \nz{4.35}& \nz{6.01}& \nz{8.58}& \nz{12.66} 
				&    \tabularnewline
				
				OP OOS ($\times10^{-2}$) 
				&& \nz{21.87}& \nz{20.21}& \nz{19.16}& \gc\nz{18.80}& \nz{19.58}
				&& \nz{10.41}& \gc \nz{8.99}& \nz{9.07}& \nz{10.15}& \nz{13.79}
				&& \nz{5.08}& \gc \nz{4.87}& \nz{6.73}& \nz{7.82}& \nz{12.15}
				&    \tabularnewline 
				RT ($\times10^3$)   
				&& \nl{2502}& \nl{3088}& \nl{3131}& \gc \nl{3180}& \nl{3207}
				&& \nl{5570}&\gc \nl{7275}& \nl{7465}& \nl{7548}& \nl{7624}
				&& \nl{15985}& \gc \nl{19497}& \nl{19228}& \nl{19301}& \nl{19256}
				&    \tabularnewline
				
				\multicolumn{19}{l}{\rule{0pt}{5mm}Input grid $N_{T}^{opt}\times N_{K}^{opt}=3\times6$   American put options}
				&    \tabularnewline
				
				\midrule
				
				OP IS ($\times10^{-2}$) 
				& &\no{0.35} & \nz{1.81}&\gc \nz{3.08}& \nz{5.35}& \nz{8.71}
				& &\no{0.23} &\nz{2.26}& \nz{3.83}& \nz{5.45}&  \gc\nz{10.22}
				& &\no{0.21} & \nz{2.32}& \nz{4.31}&\gc \nz{6.60}& \nz{11.75}
				&    \tabularnewline
				
				OP OOS ($\times10^{-2}$) 
				& &\nz{133.25} & \nz{60.99}&\gc \nz{51.26}& \nz{56.51}& \nz{65.76}
				& &\nz{114.19} & \nz{62.92}& \nz{59.12}& \nz{54.94}&\gc \nz{50.19}
				& &\nz{119.41} & \nz{59.66}& \nz{57.95}&\gc \nz{47.89}& \nz{55.30}
				&  \tabularnewline
				
				RT ($\times10^3$)   
				&& \nl{2432}& \nl{2983}& \gc\nl{3028}& \nl{3000}& \nl{3026}
				&& \nl{5298}& \nl{6623}& \nl{6829}& \nl{6780}&\gc \nl{6882}
				&& \nl{13529}& \nl{17255}& \nl{17174}&\gc \nl{17304}&\nl{17214}
				& \tabularnewline\bottomrule 
			\end{tabular*}
		}
		\par\end{centering} 
	\caption{\label{tab:lambda_pricing_am_am} 
		RMSE for American-put option pricing (OP), reported in-sample (IS)
		and out-of-sample (OOS), together with runtime (RT), for the method trained
		directly on synthetic American put option prices. Results are reported for two
		American-put training grids, different tree resolutions $N_T$, and different
		values of the spatial-regularization coefficient $\lambda_{\mathrm{space}}$,
		with cumulative training budget fixed at $N_E=8000$. Pricing errors are
		expressed in units of $10^{-2}$, while runtime is expressed in units of
		$10^3$ seconds. Grey-shaded entries identify, for each input grid and each
		fixed value of $N_T$, the value of $\lambda_{\mathrm{space}}$ that minimizes
		the OP OOS error; the corresponding OP IS and RT values are shaded as well.}
	
\end{table}

	Table~\ref{tab:lambda_pricing_am_am} reports the effect of spatial
	regularization when the tree is trained directly on American put option prices,
	for \(N_E=8000\). The results indicate that the impact of
	\(\lambda_{\mathrm{space}}\) is again modest for the dense \(10\times20\) grid, while
	it is more pronounced for the sparse \(3\times6\) grid, where positive
	regularization substantially improves out-of-sample performance.

	Overall, these results indicate that direct calibration on American option
	prices is feasible and can significantly improve pricing accuracy when the
	training grid is sufficiently informative. At the same time, the experiment
	also highlights the computational cost of incorporating early exercise instruments.
	
	\FloatBarrier
	\section{Conclusion}
	\label{sec:conclusion}
	
	We have proposed a neural calibration framework for constructing a recombining
	binomial tree directly from option prices. By deforming a benchmark tree, it produces a complete and arbitrage-free model in discrete time that closely matches
	market prices. That model can then be used for the pricing and hedging of other contingent claims, including options with early exercise features, without referring to the machine learning algorithms that were used to design it.
	An optional spatial regularizer can be added when the option data are sparse and
	the inverse problem is weakly identified, and we show that this hardly changes the results for cases where such regularization is not needed.
    	The method does not aim for a calibration of a continuous local-volatility function, and experiments show that when the available option dataset is sparse, the recovery of a regular local-volatility surface remains challenging.
	
	The proposed methodology occupies an intermediate position between classical calibrated lattice models and recent deep-learning approaches to option surfaces. Its main advantage is that it returns a discrete pricing model rather than an intermediate machine learning object, which is attractive when the final objective is fast and reliable valuation of both prices and hedge strategies. Moreover, the method requires only a small number of initial design choices, mainly
the tree resolution \(N_T\) and the maximum training budget \(N_E\).

       Our approach can be interpreted as a design problem for stock price paths under a recombination constraint, and as such it combines the flexibility
	of neural calibration with the easy interpretability and computational convenience of
	lattice models for pricing and hedging.

\section*{Acknowledgements}

The authors report that no generative AI has been used for this paper, apart from suggestions for improvements in language and grammar (ChatGPT  5.6 Sol). The authors take full responsibility for the content of the publication.

	\FloatBarrier
	
	%========================================================
	% Appendix
	%========================================================

    \newpage 
%========================================================
% Appendix
%========================================================
\appendix

\titleformat{\section}
  {\normalfont\Large\bfseries}
  {Appendix \thesection:}{0.5em}{}

\section{Proofs}
\label{ap:proofs}

\subsection{Proof of Proposition~\ref{prop:arbfree}}
\label{app:proof-arbfree}

\begin{proof}
	The assumed inequalities imply that the numerator and denominator in \Cref{eq:rnp} are strictly positive and that the numerator is strictly smaller than the denominator, so \(p_{n,j}\in(0,1)\). Local, and thus global, absence of arbitrage follows from the existence of a unique one-step equivalent martingale measure. Completeness follows from the fact that, at each node, there are exactly two successor states and one traded risky asset together with the money-market account, which yields a unique replication strategy.
\end{proof}

\subsection{Proof of Theorem~\ref{thm:wellposedness}}
\label{app:proof-wellposedness}

\begin{proof}
Let $\Theta$ be a compact subset of $\mathcal A_\eta(m,M)$.
We prove \Cref{thm:wellposedness} through three steps which allow application of the Weierstrass Extreme Value Theorem:

	\begin{enumerate}
		\item the set of trees $\Tree^{(\theta)}$ generated by the admissible class \(\theta\in\Theta\subset\mathcal A_\eta(m,M)\) is compact,
		
	\item for every strictly admissible base tree \(\Tree^{(\theta_0)}\) with \(\theta_0\in\mathcal A_\eta(m,M)\), there exists \(\varepsilon_0>0\) such that every log-deformation of \(\Tree^{(\theta_0)}\) with sup norm smaller than \(\varepsilon_0\) remains strictly admissible, and
		
		\item 
        for any value of \(\lambda_{\mathrm{prob}}\geq 0\)
		and \(\lambda_{\mathrm{space}}\geq 0\), the penalized objective function
		\(\theta\mapsto\mathcal L(\theta)\) defined in \eqref{eq:complete-objective} 
        %admits at least one		global minimizer on \(\Theta\). 
        is continuous on \(\mathcal A_\eta(m,M)\).
	\end{enumerate}
	
\smallskip
\noindent\emph{Step 1: Compactness of trees $\{\Tree^{(\theta)}:\theta\in\Theta\}$.}

  The set $\Theta$ is compact and $\theta\to  \Tree^{(\theta)}$ is continuous by the assumption on $f_\theta$. This implies that $\Tree^{(\theta)}$ is compact as well.

	\smallskip
\noindent\emph{Step 2: Strict admissibility under small log-deformations.}
		
		Take any $\theta\in\Theta\subset A_\eta(m,M)$. For every \((n,j)\), if we define \( g_n=\exp((r_n-q_n)\Delta t)\), we have due to \eqref{eq:Aeta-margin},
		\begin{equation}\label{eq:2etabound}
		S_{n+1,j+1}^{(\theta)}-S_{n+1,j}^{(\theta)}
		=
		\bigl(S_{n+1,j+1}^{(\theta)}-g_nS_{n,j}^{(\theta)}\bigr)
		+
		\bigl(g_nS_{n,j}^{(\theta)}-S_{n+1,j}^{(\theta)}\bigr)
		\ge 2\eta.
		\end{equation}
		By definition  \eqref{eq:rnp}, the numerator  of \(p_{n,j}^{(\theta)}\) is at least \(\eta\), the denominator minus the numerator is again at least \(\eta\), and the denominator is at most \(M-m\). Therefore
		\[
		p_{n,j}^{(\theta)}\in
		\left[\frac{\eta}{M-m},\,1-\frac{\eta}{M-m}\right].
		\]
		Moreover, by \eqref{eq:2etabound}  and since \(S_{n,j}^{(\theta)}\le M\), the corresponding log-prices satisfy, for $n>1$,
		\begin{equation}\label{eq:logsbounds}
		\log\!\left(\frac{S_{n,j+1}^{(\theta)}}{S_{n,j}^{(\theta)}}\right)
		=
		\log\!\left(1+\frac{S_{n,j+1}^{(\theta)}-S_{n,j}^{(\theta)}}{S_{n,j}^{(\theta)}}\right)
		\ge
		\log\!\left(1+\frac{2\eta}{M}\right),
		\end{equation}

		Take \(\Tree^{(\theta_0)}\) with \(\theta_0\in\mathcal A_\eta(m,M)\), and define
		\[
		\Tree^{(\theta)}_{n,j}= \Tree^{(\theta_0)}_{n,j}e^{\delta_{n,j}}.
		\]
		Note that by \eqref{eq:Aeta-margin} the $g_n$ are bounded, and set
		\begin{equation}\label{eq:CM}
		C_M=M (1+\max_{0\leq n\leq N_T} g_n),
		\qquad
		\varepsilon_0=\log\!\left(1+\frac{\eta}{2C_M}\right).
		\end{equation}
		If \(\|\delta\|_\infty:=\displaystyle\max_{0\leq j\leq n\leq N_T}|\delta_{n,j}|\le \varepsilon_0\) then, by \eqref{eq:logsbounds} and \eqref{eq:CM},
		\begin{equation}\label{eq:newbounda}
		me^{-\varepsilon_0}
		\le
		\Tree^{(\theta)}_{n,j}
		\le
		Me^{\varepsilon_0},
		\end{equation}
		for any \((n,j)\). Using this, and the fact that $\Tree^{(\theta_0)}$ satisfies \eqref{eq:Aeta-bounds}-\eqref{eq:Aeta-margin},
        \begin{align*}
			g_n\Tree^{(\theta)}_{n,j}-\Tree^{(\theta)}_{n+1,j}
			&=
			\bigl(g_n\Tree^{(\theta_0)}_{n,j}-\Tree^{(\theta_0)}_{n+1,j}\bigr)
			+
			g_n\bigl(\Tree^{(\theta)}_{n,j}-\Tree^{(\theta_0)}_{n,j}\bigr)
			-
			\bigl(\Tree^{(\theta)}_{n+1,j}-\Tree^{(\theta_0)}_{n+1,j}\bigr)\\
			&\ge
			\eta
			-
			g_n\bigl|\Tree^{(\theta)}_{n,j}-\Tree^{(\theta_0)}_{n,j}\bigr|
			-
			\bigl|\Tree^{(\theta)}_{n+1,j}-\Tree^{(\theta_0)}_{n+1,j}\bigr|\\
            &\ge
            \eta-(1+\max_{0\leq j\leq n\leq N_T}g_n)M\bigl(e^{\varepsilon_0}-1\bigr)
		=
		\eta-C_M\bigl(e^{\varepsilon_0}-1\bigr)
		\ge
		\frac{\eta}{2}.
		\end{align*}
		A similar argument gives
		\[
		\Tree^{(\theta)}_{n+1,j+1}-g\Tree^{(\theta)}_{n,j}\ge \frac{\eta}{2},
		\]
		and 
        this shows that the deformed tree satisfies the admissibility condition \eqref{eq:Aeta-margin}. 
		In particular, the deformed tree remains arbitrage-free and complete by \Cref{prop:arbfree}.
		
\smallskip
\noindent\emph{Step 3: Continuity of the objective function.}
		
		For each node \((n,j)\), the local risk-neutral probability  is obtained from the three adjacent node values through the map
		in \eqref{eq:rnp} and
		 the denominator is bounded
		below by \(2\eta\) due to \eqref{eq:2etabound}, so the map is Lipschitz in $\theta$ on the compact domain \(\Theta\). Hence each local probability \(p_{n,j}^{(\theta)}\) depends continuously on the tree nodes and hence, by the assumed continuity of $(\theta,x,y)\mapsto f_\theta(x,y)$, it is also continuous in 
        \(\theta\).
        
				This property is then inherited by the Arrow--Debreu state prices $\lambda_{n,j}^{(\theta)}$, which are generated recursively through finitely many additions and multiplications involving these probabilities and the discount factor, and for the prices of
European calls and puts with maturities that correspond to grid points, since their payoff maps \(x\mapsto (x-K)_+\) and \(x\mapsto (K-x)_+\) are continuous. For American calls and puts, the backward induction recursion consists of finitely many continuous operations and pointwise maxima of continuous functions, so the same holds.
		If the maturity for an option lies between two grid dates, the pricing rule uses a convex
		combination of the two neighbouring grid prices; this also preserves continuity. 
        
        We thus conclude that for all options $i$ the map $
        \theta\mapsto\Pi_i^{(\theta)}$, and therefore the $\theta\mapsto \mathrm{MSE}(\theta)$ function defined in \eqref{eq:mse}, is continuous.
		By \eqref{eq:complete-objective} the proof of Theorem~\ref{thm:wellposedness} is therefore finished if we can prove the same for  \(\mathcal{P}_{\mathrm{prob}}\)  and  \(\mathcal{P}_{\mathrm{space}}\).

		For the probability penalty \(\mathcal{P}_{\mathrm{prob}}\) this follows from \eqref{eq:Pprob} since the ReLU function $\phi$ is continuous.
		The spatial penalty \(\mathcal{P}_{\mathrm{space}}\) is built from finitely many continuous
	operations involving node values, clipped risk-neutral probabilities, logarithms, finite
	differences of the tree-implied local variances, and normalizations, so  it is also continuous on \(\mathcal A_\eta(m,M)\). This concludes the proof.
\end{proof}
	
	\FloatBarrier
	\section{Additional local-volatility estimates} 
	\label{app:lv-slices}
	
	For completeness, we report additional comparisons between the true local volatility and the tree-implied local volatility for the two training grids considered in the synthetic-data experiments. These figures complement the surface plots discussed in the main text in \Cref{sec:sy_da_ex}. Figure~\ref{fig:appendix_lv_slices} shows results without regularization, and
Figure~\ref{fig:appendix_lv_slices_lambda} results for $\lambda_{\mathrm{space}}=3$.
	
		As can be seen from Figure~\ref{fig:appendix_lv_slices}, the local volatility extracted node by node from the calibrated tree may display  irregularities, in particular when the model is calibrated on a sparse dataset. The GPR reconstruction acts as an effective smoothing step and yields a more regular local-volatility estimate.
	
A comparison between Figures~\ref{fig:appendix_lv_slices} and
\ref{fig:appendix_lv_slices_lambda} further suggests that the introduction of the penalty term for regularization reduces the amplitude of the 
oscillations.
 This effect is visible in both training configurations, and especially in the sparse $3\times 6$ case. However, once the GPR smoothing step is applied, the additional improvement remains modest. This indicates that the penalty is more useful as a device for guiding the training of the neural network than as a tool to improve the final reconstructed local-volatility surface.
		
	\begin{figure}[h]
		\centering
		\includegraphics[width=0.95\textwidth]{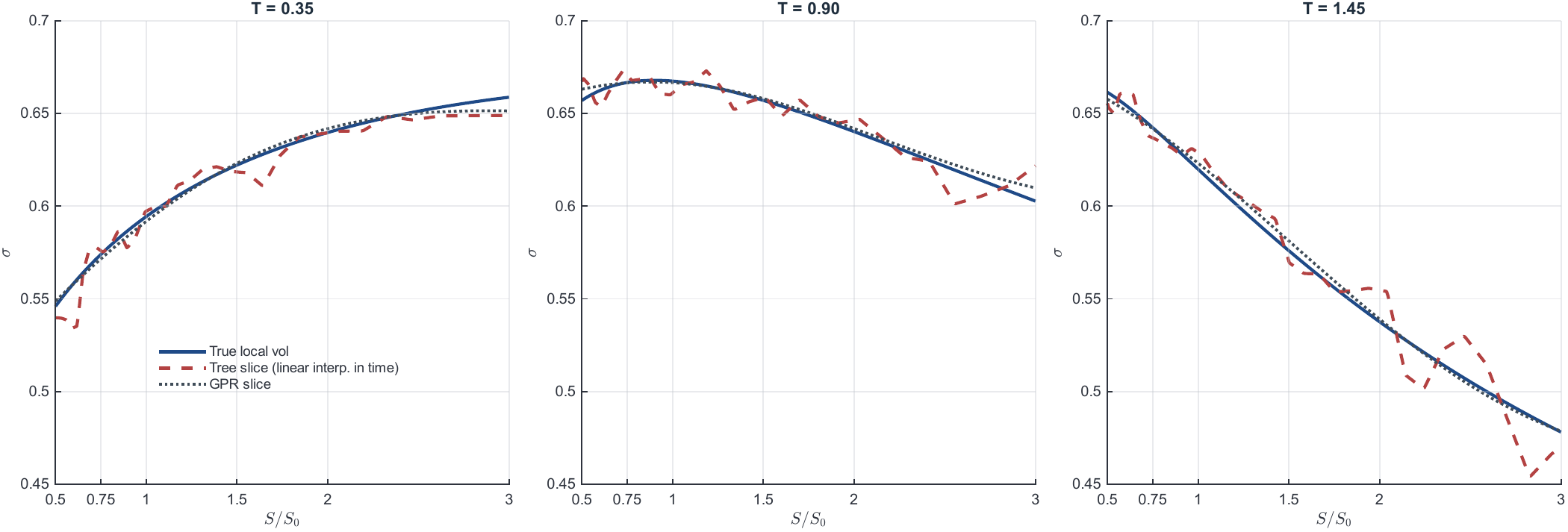}
		\includegraphics[width=0.95\textwidth]{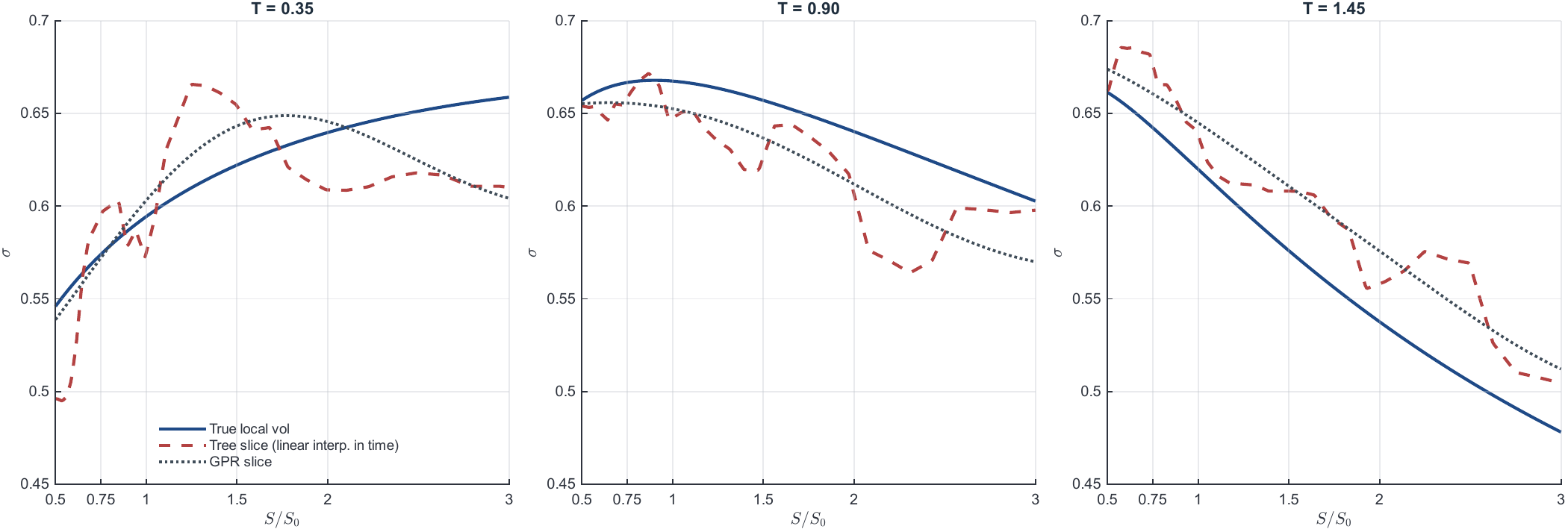}
		\caption{Slice-wise comparison of the true local volatility and the tree-implied local volatility for the \(10\times 20\) training grid (above) and for the \(3\times 6\) training grid (below), at three representative maturities. Parameters: $N_T=720$, $N_E=8000$, $\lambda_{\mathrm{space}}=0$.}
		\label{fig:appendix_lv_slices}
	\end{figure} 
	
		\begin{figure}[h]
		\centering
		\includegraphics[width=0.95\textwidth]{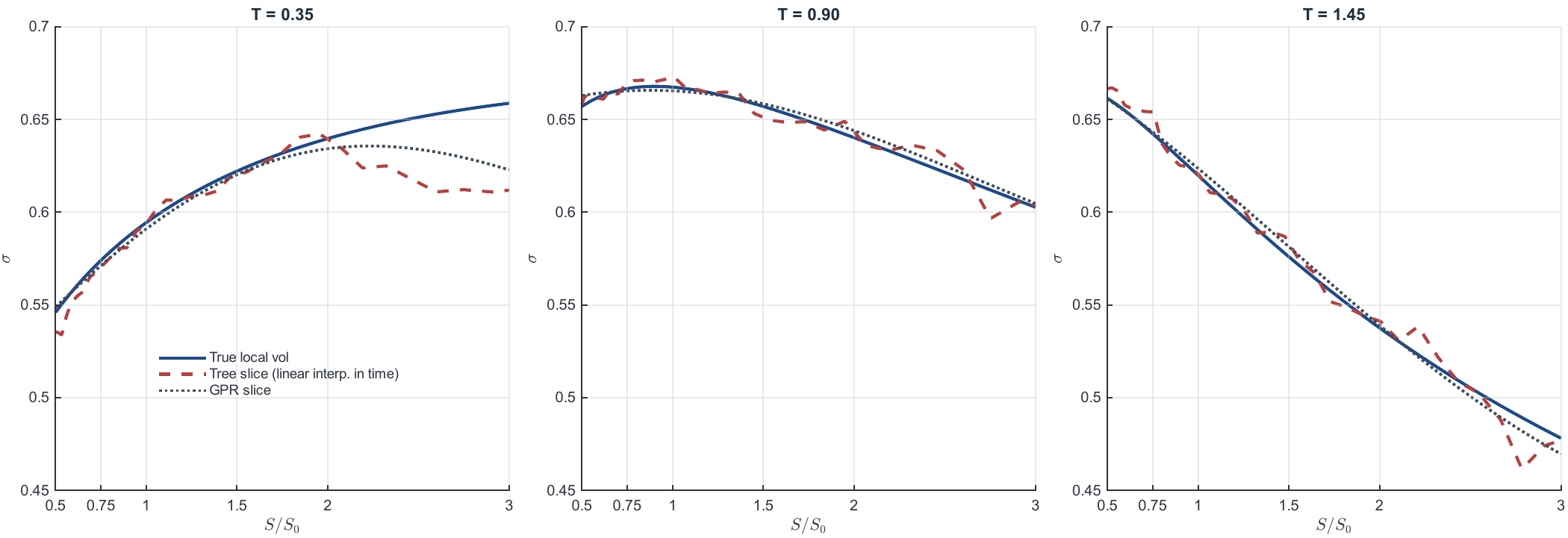}
		\includegraphics[width=0.95\textwidth]{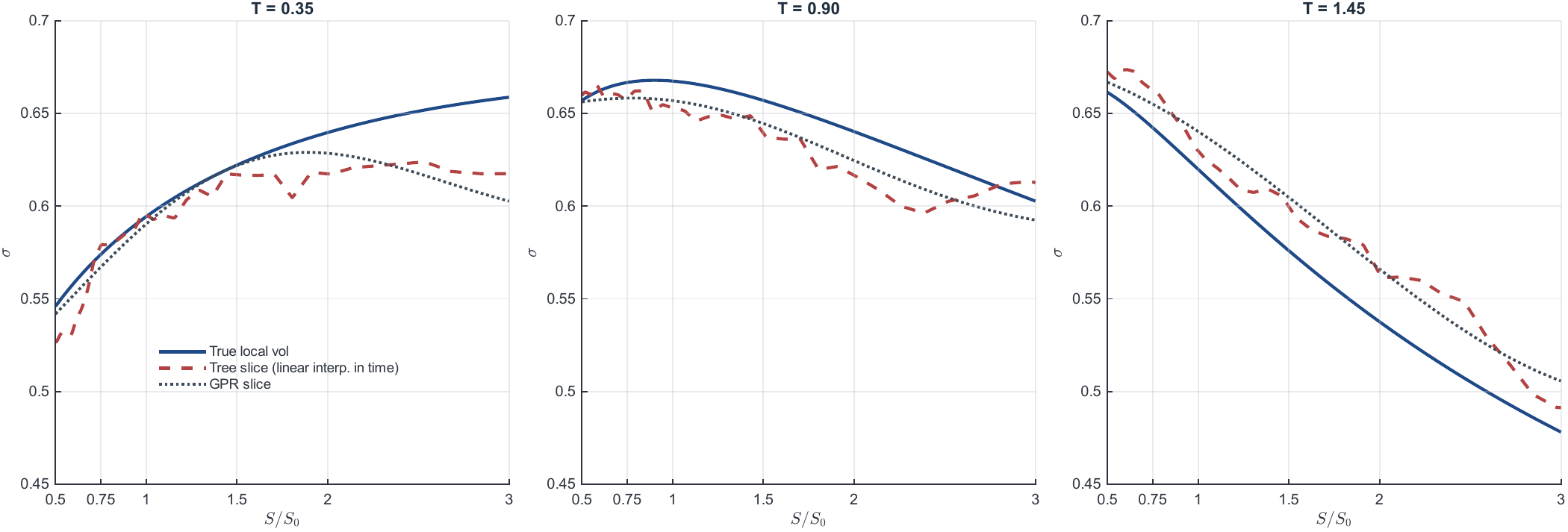}
		\caption{Slice-wise comparison of the true local volatility and the tree-implied local volatility for the \(10\times 20\) training grid (above) and for the \(3\times 6\) training grid (below), at three representative maturities. Parameters: $N_T=720$, $N_E=8000$, $\lambda_{\mathrm{space}}=3$.}
		\label{fig:appendix_lv_slices_lambda}
	\end{figure} 

	% \FloatBarrier

	\FloatBarrier
	
	\bibliographystyle{plainnat}
	\bibliography{My_biblio}
\end{document}